\newcommand{\Gbb}{\mathbb{G}}
\newcommand{\Zbb}{\mathbb{Z}}
\newcommand{\Acal}{\mathcal{A}}
\newcommand{\Bcal}{\mathcal{B}}
\newcommand{\Ccal}{\mathcal{C}}
\newcommand{\Dcal}{\mathcal{D}}
\newcommand{\Ecal}{\mathcal{E}}
\newcommand{\Fcal}{\mathcal{F}}
\newcommand{\Gcal}{\mathcal{G}}
\newcommand{\Hcal}{\mathcal{H}}
\newcommand{\Ical}{\mathcal{I}}
\newcommand{\Kcal}{\mathcal{K}}
\newcommand{\Lcal}{\mathcal{L}}
\newcommand{\Mcal}{\mathcal{M}}
\newcommand{\Ncal}{\mathcal{N}}
\newcommand{\Ocal}{\mathcal{O}}
\newcommand{\Rcal}{\mathcal{R}}
\newcommand{\Scal}{\mathcal{S}}
\newcommand{\Tcal}{\mathcal{T}}
\newcommand{\Vcal}{\mathcal{V}}
\newcommand{\Wcal}{\mathcal{W}}
\newcommand{\Xcal}{\mathcal{X}}
\newcommand{\eqn}[1]{\begin{align} #1 \end{align}}
\newcommand{\eqnN}[1]{\begin{align*} #1 \end{align*}}
\newcommand{\norm}[1]{\left\Vert #1 \right \Vert}
\theoremstyle{plain}
\newtheorem{theorem}{Theorem}
\newtheorem{corollary}{Corollary}
\newtheorem{lemma}{Lemma}
\newtheorem{problem}{Problem}
\newtheorem{definition}{Definition}
\newtheorem{prop}{Proposition}
\theoremstyle{definition}
\newtheorem{assumption}{Assumption}
\newtheorem{remark}{Remark}
\theoremstyle{remark}
\let\NAT@parse\undefined
\title{\LARGE \bf Partial Resilient Leader-Follower Consensus in Time-Varying Graphs
}
\author{Haejoon Lee and Dimitra Panagou
\thanks{*This work is supported by the Air Force Office of Scientific Research (AFOSR) under FA9550-23-1-0163}% <-this % stops a space
\thanks{All authors are with the Robotics Department, University of Michigan, Ann Arbor, MI, USA
        {\tt \{haejoonl, dpanagou\}@umich.edu}}
}
\begin{document}

\maketitle
\thispagestyle{empty}
\pagestyle{empty}

%%%%%%%%%%%%%%%%%%%%%%%%%%%%%%%%%%%%%%%%%%%%%%%%%%%%%%%%%%%%%%%%%%%%%%%%%%%%%%%%
\begin{abstract} 
This work studies resilient leader-follower consensus with a bounded number of adversaries. Existing approaches typically require robustness conditions of the entire network to guarantee resilient consensus. However, the behavior of such systems when these conditions are not fully met remains unexplored. To address this gap, we introduce the notion of partial leader-follower consensus, in which a subset of non-adversarial followers successfully tracks the leader’s reference state despite insufficient robustness. We propose a novel distributed algorithm --- the Bootstrap Percolation and Mean Subsequence Reduced (BP-MSR) algorithm --- and establish sufficient conditions for individual followers to achieve consensus via the BP-MSR algorithm in arbitrary time-varying graphs. We validate our findings through simulations, demonstrating that our method guarantees partial leader-follower consensus, even when standard resilient consensus algorithms fail.
\end{abstract}

%%%%%%%%%%%%%%%%%%%%%%%%%%%%%%%%%%%%%%%%%%%%%%%%%%%%%%%%%%%%%%%%%%%%%%%%%%%%%%%%
\section{Introduction}

Consensus is a process where multiple agents agree to a common value. However, standard consensus protocols are vulnerable to adversarial agents that transmit false information to the network, potentially causing significant performance degradation or even failure. This has motivated extensive research on resilient consensus, which ensures consensus in the presence of adversarial agents~\cite{LeBlanc13,zhang2012, saldana2017, time_varying_strongly_usevitch20, CDC2025, joint_robustness2023, rezaee21, yuan_multi_hop2021,CDC2024, akgün2025multi,sood2025balancing, TAC2025}.

Many works rely on \textit{Mean Subsequence Reduced} (MSR)-type algorithms~\cite{msr_type_2012, LeBlanc13, saldana2017, yuan_multi_hop2021}. In the algorithms, each non-adversarial (normal) agent discards some extreme values from its neighbors before updating its state, ensuring that the updated value is not influenced by the adversarial agents. For many MSR-type algorithms to succeed, the communication graph is assumed to satisfy topological conditions called $\textit{r-robustness}$ or $(r,s)$-robustness, which are sufficient (and necessary) conditions for normal agents to achieve consensus within the convex hull of their initial values~\cite{LeBlanc13, zhang2012}.

A closely related problem is that of resilient leader-follower consensus, where a subset of agents (leaders) propagate a reference signal that the rest of the network (followers) aims to track, despite the presence of adversarial agents~\cite{ren2010, time_varying_strongly_usevitch20, dimar2008}. Resilient broadcasting problems have also been studied~\cite{CPA1,CPA3, zhang2012}, where algorithms such as the \textit{Certified Propagation Algorithm} (CPA) were proposed to enable a trustworthy leader to reliably broadcast information despite the presence of faulty nodes. In~\cite{leblanc_parameter_est}, the authors addressed the problem of resilient distributed estimation by leveraging reliable agents that are directly connected to others. Additionally,~\cite{mitra2019} established conditions under which information can be resiliently transmitted from a set of source nodes to other nodes without direct access to the information. 

These leader-follower results similarly depend on robustness conditions of the entire network. One of the most studied is the strong $r$-robustness~\cite{zhang2012}, which roughly quantifies the redundant information flow from a designated subset of nodes to the rest of the network. This condition was later extended to time-varying graphs~\cite{time_varying_strongly_usevitch20}. A related concept, termed $r$ leader-follower robustness, was introduced in~\cite{rezaee21} for time-invariant networks, but it assumes there exists a trustworthy leader. More recently, the notion of joint $r$-robust following was proposed in~\cite{yuan2024reaching}, providing necessary and sufficient conditions for resilient leader-follower consensus in time-varying graphs with multi-hop communication. These robustness notions have served as the foundation for numerous future studies~\cite{li_resilient_2024, wang_resilient_2024, cui_resilient_2025,shang_resilient_2023,ICRA2025}.

% Achieving these topological properties sometimes requires dense communication links, which might not be impractical for many practical systems. In~\cite{TAC2025}, the fundamental structures of $r$- and $(r,s)$-robust graphs have been studied, while some attempted to relax the connectivity requirements by using time-varying network topologies~\cite{saldana2017,joint_robustness2023,yu2022, time_varying_strongly_usevitch20}. Trustworthy agents are used to achieve robustness with a fewer edges in~\cite{abbas2018}. Authors in~\cite{yuan_multi_hop2021} consider resilient consensus in multi-hop communication graphs. The problem of maintaining robustness in mobile multi-robot networks has also been studied~\cite{cavorsi2022, saulnier2017, yu2022, guerrero2020, CDC2025, ICRA2025}. Despite these efforts, the behavior of resilient consensus algorithms in non-ideal settings - where robustness conditions fail - remains poorly understood.

Despite these efforts, the system behavior under MSR algorithms in non-ideal settings --- \textbf{where these robustness conditions of the entire network fail} --- remains poorly understood. Since robustness is often hard to guarantee in energy- or communication-constrained systems, analyzing these scenarios allows us to formally quantify performance degradation and limits.
Some progress has been made for \textit{leaderless settings}. For example,\cite{community_consensus2024,shang2022cluster} study community or cluster consensus, where multiple sets of agents converge to multiple distinct values when the robustness of the entire network is not sufficient. However, they also assume certain robustness within communities (clusters) and do not study cases where these conditions fail. 
% Topological structures in which any edge removal decreases the $r$- and $(r,s)$-robustness are studied in~\cite{CDC2024, TAC2025}; however, the resulting consensus behavior under such edge removals has not been investigated. 
In another line of work,\cite{khalyavin2024} investigates the \textit{non-resiliency} of a graph by examining the number of possible \textit{non-convergent nodes} (agents that fail to reach consensus with any other nodes) when the network lacks sufficient robustness, but these results are limited to specific classes of graphs.

\emph{Contributions:} This paper studies the behavior of multi-agent systems under insufficient network robustness for resilient leader-follower consensus in \textit{arbitrary} time-varying graphs. Our main contributions are:
\begin{itemize}
    % \item We introduce the notion of \textit{partial leader-follower consensus}, in which a subset of followers tracks the leaders' reference states, while others may fail due to insufficient network robustness.
    % \item We develop a novel \textit{Bootstrap Percolation and MSR} (BP-MSR) algorithm, which ensures partial leader-follower consensus regardless of the robustness level.
    \item We develop a novel \textit{Bootstrap Percolation and MSR} (BP-MSR) algorithm, that guarantees partial resilient leader-follower consensus, where a subset of normal followers achieve consensus.
    \item We establish sufficient conditions for followers to achieve consensus via the BP-MSR algorithm. Unlike previous work~\cite{zhang2012,time_varying_strongly_usevitch20, rezaee21, yuan2024reaching} that relies on robustness of the entire network to guarantee convergence of \textit{all followers}, our conditions apply to each follower, focusing on convergence of \textit{individual followers}. 
    \item We validate our approach through simulations, showing that the BP-MSR algorithm guarantees partial resilient leader-follower consensus even when traditional resilient consensus algorithms fail to do so.
\end{itemize}

% \emph{Organization}: 

\section{Preliminaries}
\label{sec:prelim}
We denote the cardinality of a set $\mathcal S$ as $|\mathcal S|$. We denote the set of non-negative and positive integers as $\mathbb Z_{\geq 0}$ and $\mathbb Z_{>0}$. We denote a simple time-varying digraph as $\mathcal G[t] = (\mathcal {V},\mathcal{E}[t])$ where $\mathcal {V}$ and $\mathcal{E}[t]$ are the finite vertex set and the time-varying directed edge set, respectively. We also denote $\Gbb=(\Gcal[t])_{t\in \mathbb Z_{\geq 0}}$ as a sequence of time-varying graphs. Within the set $\mathcal V$, we have leaders $\mathcal L=\{1,\dots, l\} \subset \mathcal V$ that propagate the reference state according to the same function $f_r:\mathbb Z_{\geq 0} \to \mathbb R$ to followers $\mathcal F = \{l+1,\dots, n\} = \mathcal V \setminus \mathcal L$. We denote $|\mathcal L|=l$ and $|\mathcal F|=f$. A directed edge $(i,j)\in \mathcal{E}[t]$ indicates that agent $j$ is able to receive information from agent $i$ at time $t$. This implies agent $i$ is an \textit{in-neighbor} of agent $j$, and agent $j$ is an \textit{out-neighbor} of agent $i$. The in-neighbor and out-neighbor sets of agent $i$ are denoted as $\mathcal N_i[t] =\{j\in \mathcal V \;|\; (j,i) \in \mathcal{E}[t]\}$ and $\mathcal N_i^o[t] =\{j\in \mathcal V \;|\; (i,j) \in \mathcal{E}[t]\}$, respectively. The extended in-neighbor set is given as $\Bcal_i[t] = \Ncal_i[t]\cup \{i\}$. 

\subsection{Bootstrap Percolation} At a fixed time step $t\in \Zbb_{\geq 0}$, bootstrap percolation (BP) models the spread of \textit{activation} of nodes of $\mathcal G[t]$ from a set of initial active nodes $\mathcal L\subset \Vcal$, given a user-defined threshold $r\in \mathbb Z_{>0}$~\cite{jason2012}.
In BP, each node is either \textit{active} or \textit{inactive}. At each iteration of BP, each follower $i \in \Fcal$ becomes active if it has at least $r$ active neighbors and remains active until the process terminates, which happens after at most $f$ iterations. 

We modify BP process slightly such that each agent's \textit{update occurs locally}. 
Formally, let $q_i[t,k]\in \{0,1\}$ be the true activation state of node $i$ at time $t$ and BP iteration $k\in \{0,\dots, f\}$, where $q_i[t,k]=1$ if active and inactive otherwise. Also, let $q_i^j[t,k]$ be the activation state of agent $i$ received by agent $j$. For $k\in\{1,\dots,f\}$, agent $i$ shares its activation state $q_i[t,k]$ with its out-neighbors and updates as
\begin{equation}
q_i[t,k]=
\begin{cases}
   1 & \text{if $q_i[t,k-1]=1$} \text{ or } \text{$\sum_{j \in \Ncal_i[t]} q_j^i[t,k-1]\geq r$},\\
   % 1 & \text{if $\sum_{j \in \Ncal_i[t]} q_j^i[t,k-1]\geq r$},\\
   0 & \text{otherwise},
\end{cases}
\label{eq:percolation}
\end{equation}where $q_i[t,0]=1$ $\forall i \in \Lcal$ and $q_i[t,0]=0$ $\forall i \in \Fcal$. 
For brevity, we denote $q_i[t]:=q_i[t,f]$.

\subsection{Resilient Consensus}
We review resilient leader-follower consensus when agents cannot distinguish between leaders and followers.
Let $x_i[t]\in \mathbb R$ denote the consensus state of agent $i$ at time $t$, and $x_i^j[t]$ the copy received by agent $j$. 
Each agent $i$ shares $x_i[t]$ with its out-neighbors and updates its state using
\begin{equation}
x_i[t+1]=
    \begin{cases}
        f_r[t], & i\in \mathcal L, \\
     \sum_{j\in \mathcal B_i[t]} w_{ij}[t]x_j^i[t], & i \in \mathcal F,
    \end{cases}
    \label{eq:linear}
\end{equation}
where $w_{ij}[t]$ is the weight assigned to $x_j^i[t]$ by agent $i$. We assume $\exists \alpha \in(0,1)$ such that $\forall i \in \mathcal V$,
\begin{itemize}
    \item $w_{ij}[t]\geq \alpha$ if $j\in \Bcal_i[t]$, or $w_{ij}[t]=0$ otherwise,
    \item $\sum_{j=1}^n w_{ij}[t]=1$.
\end{itemize}
% If there exist uniformly bounded sequence of contiguous time intervals where the union of $\Gcal[t]$ in each interval contains a directed spanning tree, consensus is guaranteed through \eqref{eq:linear} \cite[Theorem 3.10]{ren2005}. However, as shown in~\cite{strongly_usevitch18}, the consensus no longer holds with adversaries. 
However, as shown in~\cite{strongly_usevitch18}, the protocol~\eqref{eq:linear} does not guarantee consensus with adversaries, which we define below:

\begin{definition}[\textbf{Adversarial agent}]
    \label{def:byzantine}
    An agent $a\in \mathcal V$ is an \textbf {adversary} if it transmits consensus state to all out-neighbors at all $t\geq \Zbb_{\geq 0}$ but does not follow the  update protocols~\eqref{eq:percolation} and/or \eqref{eq:linear}, or if it
does not send the same values to all out-neighbors at some time $t$.
\end{definition}

 The adversarial agents in this paper are a slight modification of the well-known Byzantine agents~\cite{LeBlanc13, time_varying_strongly_usevitch20}: each $a\in \Acal$ may manipulate both its consensus and activation states. Both followers and leaders can be adversaries. Non-adversarial agents are called normal agents. We denote $\Acal$ as the set of adversaries and $\Mcal = \Vcal \setminus \Acal$ as the set of normal agents, with $\Fcal_{\Mcal} = \Fcal \cap \Mcal\neq \emptyset$ and $\Lcal_{\Mcal} = \Lcal \cap \Mcal\neq \emptyset$.

% Note we are considering Byzantine agents that encompass various types of attackers~\cite{}  Also, compared to traditional definition given in~\cite{LeBlanc13, time_varying_strongly_usevitch20}, we consider the adversaries that can manipulate both their consensus state $x_a^j[t]$ and/or their activation state $q_a^j[t,k]$. 
% We refer to non-adversarial agents as normal agents. We denote $\Acal$ a set of adversaries, while $\Mcal=\Vcal\setminus \Acal$ a set of normal agents. We denote $\Fcal_{\Mcal}=\Fcal\cap \Mcal$ and $\Lcal_{\Mcal}=\Lcal \cap \Mcal$. 

\begin{assumption} For all $a \in \Acal$, $u \in \Ncal_a^o[t]$, $k \in \{0,\dots, f\}$, and $t \in \mathbb{Z}_{\geq 0}$, we have $q_a^u[t,k] \in \{0,1\}$. 
% Furthermore, for any $a \in \Acal$, if $q_a^u[t,k'] = 1$ for some $u \in \Ncal_a^o[t]$ at iteration $k' < f$, then $q_a^u[t,k''] = 1$ for all $k'' \geq k'$. 
    \label{assum:misreport}
\end{assumption}
% Hence, an adversary cannot (i) transmit non-binary values or (ii) switch $q_a^u[t,k]$ from $1$ to $0$ during BP process. 
Hence, an adversary cannot transmit non-binary values during the BP process. This is a direct violation of~\eqref{eq:percolation} and would lead to detections, which adversaries would try to avoid. In addition, we consider the following scope of threat:

\begin{definition}[$\mathbf F$\textbf{-local}]
    \label{def:flocal}
    A set $\mathcal S \subset \mathcal V$ is \textbf{$\mathbf F$-local} if all other nodes have at most $F$ nodes of $\mathcal S$ as their in-neighbors (i.e. $|\mathcal N_i[t]\cap \mathcal S|\leq F$, $\forall i \in \mathcal V \setminus \mathcal S$) $\forall t\geq \Zbb_{\geq 0}$.
\end{definition}

In response to adversaries, works like ~\cite{strongly_usevitch18, time_varying_strongly_usevitch20, yuan2024reaching,rezaee21} studied how to achieve full resilient leader-follower consensus, whose definition is as follows:
\begin{definition}
The followers in $\Fcal_{\Mcal}$ are said to achieve \textbf{full resilient leader-follower consensus} if, for every initial consensus state $x_i[0]$, $i \in \Fcal_{\Mcal}$, and for every shared consensus state of adversaries $x_a^u[t]$, $a \in \Acal$, $u\in \Ncal_a^o[t]$, $t \in \Zbb_{\geq 0}$,
\eqn{\lim_{t\to \infty}\norm{x_i[t]-f_r[t]}=0, \quad \forall i \in \Fcal_{\Mcal}.}
% regardless of the initial states of normal agents in $\Mcal$ and the behavior of any $F$-local adversary set $\Acal$.
\end{definition}
\noindent Their results, however, hold only under certain topological conditions specified below.

\begin{definition}[$\mathbf r$-\textbf{reachable} \cite{LeBlanc13}]
    \label{reachability}
    Let $\mathcal G[t] = (\mathcal V,\mathcal{E}[t])$ be a graph at time $t$ and $\mathcal S$ be a nonempty subset of $\mathcal V$. The subset $\mathcal S$ is $\mathbf r$-\textbf{reachable} at $t$ if $\exists i\in \mathcal S$ such that $|\mathcal N_i[t] \backslash \mathcal S|\geq r$.
\end{definition}

\begin{definition}[\textbf{strongly} $\mathbf r$-\textbf{robust} \cite{zhang2012}]
    \label{robustness}
    A graph $\mathcal G[t] = (\mathcal V,\mathcal{E}[t])$ is \textbf{strongly} $\mathbf r$-\textbf{robust} with respect to $\mathcal S_1 \subset \mathcal V$ at time $t$ if $\forall \mathcal S_2 \subset \mathcal V \setminus \mathcal S_1$ such that $\mathcal S_2 \neq \emptyset$, $\mathcal S_2$ is $r$-reachable at $t$.
\end{definition}
If a network is strongly $(2F+1)$-robust with respect to $\Lcal$ $\forall t\in \mathbb{Z}_{\geq 0}$, then followers in $\mathcal{F}_{\mathcal{M}}$ achieve full resilient leader-follower consensus via the W-MSR algorithm~\cite{strongly_usevitch18}. This was later extended to incorporate a union of time-varying graphs over a time horizon in~\cite{time_varying_strongly_usevitch20}. Similar conditions appear in~\cite{yuan2024reaching, rezaee21}. However, in contrast to those conditions, there exists a useful result in regard to strong $r$-robustness:

\begin{lemma}[{\cite[Corollary 1 (revised)]{ICRA2025}}]
    Let $\mathcal G[t]$ be a graph at time $t$, with threshold $r \ge 2$ and initial set $\Lcal \subseteq \Vcal$. Suppose that all agents $i\in \Vcal$ truthfully share and follow~\eqref{eq:percolation} to update $q_i[t,k]$ for all $k\in[0,f]$. Then, $\mathcal G[t]$ is strongly $r$-robust with respect to $\Lcal$ at time $t$ if and only if $q_i[t] = 1$ $\forall i \in \Fcal$.
\label{lem:strong_r_not_mine}
\end{lemma}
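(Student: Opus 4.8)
The plan is to prove the two implications separately, exploiting the fact that under truthful sharing the update \eqref{eq:percolation} reduces to a monotone bootstrap-percolation rule on $\Gcal[t]$. Since every in-neighbor reports truthfully, $q_j^i[t,k]=q_j[t,k]$, so the active set $\Qcal_k := \{i\in\Vcal : q_i[t,k]=1\}$ is nondecreasing in $k$, seeded by $\Qcal_0=\Lcal$. Because each non-trivial iteration activates at least one new follower while $|\Fcal|=f$, the process reaches a fixed point by iteration $f$; thus $q_i[t]=q_i[t,f]$ encodes the terminal active set $\Qcal_f$, which is \emph{stable} in the sense that no node outside $\Qcal_f$ has $r$ or more in-neighbors inside $\Qcal_f$. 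I will use this stability property and the activation ordering as the two engines of the proof.

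For the forward direction $(\Rightarrow)$, I would assume strong $r$-robustness w.r.t.\ $\Lcal$ and argue by contradiction. Set $\Scal_2 := \Fcal \setminus \Qcal_f = \{i\in\Fcal : q_i[t]=0\}$ and suppose $\Scal_2 \neq \emptyset$. Since $\Lcal\subseteq\Qcal_f$, the terminal inactive set is exactly $\Scal_2 \subseteq \Vcal\setminus\Lcal$ and is nonempty, so strong $r$-robustness yields some $i\in\Scal_2$ with $|\Ncal_i[t]\setminus\Scal_2|\geq r$. Every in-neighbor of $i$ outside $\Scal_2$ is active at termination, so $i$ would have at least $r$ active in-neighbors, contradicting the stability of $\Qcal_f$ established above. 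Hence $\Scal_2=\emptyset$, i.e.\ $q_i[t]=1$ for all $i\in\Fcal$.

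For the converse $(\Leftarrow)$, I would assume $q_i[t]=1$ for all $i\in\Fcal$ and take an arbitrary nonempty $\Scal_2\subseteq\Vcal\setminus\Lcal=\Fcal$. Every node of $\Scal_2$ is a follower, hence first activated at some iteration $\geq 1$; let $i\in\Scal_2$ be one whose first-activation iteration $k^\ast\geq 1$ is minimal over $\Scal_2$. By \eqref{eq:percolation}, $i$ became active because $\sum_{j\in\Ncal_i[t]} q_j[t,k^\ast-1]\geq r$, i.e.\ $i$ has at least $r$ in-neighbors active at iteration $k^\ast-1$. Any such in-neighbor lying in $\Scal_2$ would then be active strictly before $k^\ast$, contradicting minimality of $k^\ast$; therefore all of these $\geq r$ active in-neighbors lie outside $\Scal_2$, giving $|\Ncal_i[t]\setminus\Scal_2|\geq r$. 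Thus $\Scal_2$ is $r$-reachable, and since $\Scal_2$ was arbitrary, $\Gcal[t]$ is strongly $r$-robust w.r.t.\ $\Lcal$.

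The set-membership bookkeeping is routine; the one point needing care — and the main obstacle — is the clean justification that $\Qcal_f$ is a genuine fixed point, so that the stability property invoked in $(\Rightarrow)$ is legitimate, together with the sharp distinction between ``active at iteration $k$'' and ``first activated at iteration $k$.'' The minimality argument in $(\Leftarrow)$ relies precisely on the latter notion and on the monotonicity of $\Qcal_k$, so I would make both explicit before running the contradiction and extremal arguments.
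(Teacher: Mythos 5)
Your proof is correct. Note, however, that the paper itself does not prove this lemma at all: it is imported verbatim from \cite[Corollary 1 (revised)]{ICRA2025}, so there is no in-paper argument to compare yours against. On its own merits, your argument is sound and is the natural one: under truthful sharing the update \eqref{eq:percolation} is a monotone bootstrap process, and your two key devices both check out. For the stability of $\Qcal_f$, the pigeonhole step is legitimate --- either some iteration $k\le f$ activates no new follower, in which case $\Qcal_f=\Qcal_{k-1}$ is a genuine fixed point and every node outside it has fewer than $r$ in-neighbors inside it, or every iteration activates a new follower, in which case $\Qcal_f=\Vcal$ and stability is vacuous; either way the contradiction in the forward direction goes through (using $\Vcal\setminus\Scal_2=\Qcal_f$, which holds because $\Lcal\subseteq\Qcal_f$ by monotonicity). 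The extremal argument in the converse --- picking $i\in\Scal_2$ of minimal first-activation time and observing that its $r$ certifying in-neighbors cannot lie in $\Scal_2$ --- is exactly the right use of the activation ordering and yields $r$-reachability of an arbitrary nonempty $\Scal_2\subseteq\Fcal$. One minor observation: your proof never invokes the hypothesis $r\ge 2$; that restriction is inherited from the cited source rather than needed for the equivalence, so its absence from your argument is not a gap.
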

\begin{remark}
\label{remark:on_lemma}
    Note \Cref{lem:strong_r_not_mine} holds only if all agents in $\mathcal G[t]$ truthfully update and share their activation states throughout BP process. However, in the presence of adversaries, this assumption no longer holds. Specifically, adversaries $a \in \Acal$ can share wrong values of $q_a[t, k]$ at any $t \in \mathbb{Z}_{\geq 0}$ and $k \in \{0,\dots, f\}$. As a result, a normal follower $i \in \Fcal_{\Mcal}$ may remain inactive (i.e., $q_i[t, k] = 0$) at $t$, even if it would have become active when all agents follow~\eqref{eq:percolation}, and vice versa.
\end{remark}

% This highlights a key dilemma: while robust structural properties like strong $r$-robustness can ensure global consensus in the presence of adversaries, such properties are not easy to compute locally.

\section{Problem Statement}
% \begin{figure}
%     \centering
%     \includegraphics[width=0.8\linewidth]{figures/summary.pdf}
%     \caption{Visualization of partial resilient leader-follower consensus, where only a subset of followers converge to the leader's value.}
%     \label{fig:summary}
% \end{figure}
Much of the existing work on resilient leader-follower consensus~\cite{time_varying_strongly_usevitch20, yuan2024reaching, rezaee21} relies on the assumption that the entire network always satisfies certain topological robustness conditions. In general, these conditions require dense network structures which might be difficult to maintain in practice. Therefore, our goal is to study the system’s behavior over an arbitrary sequence of graphs, even when robustness conditions are not met.

To this end, rather than focusing on full resilient leader-follower consensus, where all normal followers are required to converge, we introduce the notion of partial resilient leader-follower consensus:
\begin{definition}
Let $\Fcal_{\Ccal} \subset \Fcal_{\Mcal}$ be a non-empty subset of normal followers.  
The followers in $\Fcal_{\Ccal}$ are said to achieve \textbf{partial resilient leader-follower consensus} if, for every initial consensus state $x_i[0]$, $i \in \Fcal_{\Mcal}$, and for every shared consensus state of adversaries $x_a^u[t]$, $a \in \Acal$, $u\in \Ncal_a^o[t]$, $t \in \Zbb_{\geq 0}$,
\eqn{\lim_{t\to \infty}\norm{x_i[t]-f_r[t]}=0, \quad \forall i \in \Fcal_{\Ccal}.}
We call $\Fcal_{\Ccal}$ the \textbf{convergent set} and its members \textbf{convergent followers}. The set $\Fcal_{\Ccal}' := \Fcal_{\Mcal} \setminus \Fcal_{\Ccal}$ is the \textbf{non-convergent set}, with members called \textbf{non-convergent followers}.
\end{definition}
\begin{remark}
    In~\cite{khalyavin2024}, non-convergent nodes are defined in the context of leaderless consensus as nodes that may fail to converge to a common value with any other nodes. In contrast, we define non-convergent followers as followers that may fail to converge to \textit{leader's reference state}.
\end{remark}

% A visualization of partial leader-follower consensus is given in~\Cref{fig:summary}.

Although adversaries may manipulate both their consensus and activation states, our focus is on the convergence of the \emph{consensus states}. That is, we do not require the activation states received by a normal agent to be correct. Nevertheless, as we demonstrate in later sections, how adversaries share their activation states influences which followers ultimately achieve convergence (see~\Cref{sec:convergent_set}).

 We now formally state our problem:
\begin{problem}
    Let $\Gbb=(\Gcal[t])_{t\in \mathbb Z_{\geq 0}}$ be an arbitrary sequence of time-varying digraphs under an $F$-local $\Acal$. We aim to characterize sufficient conditions under which followers in $\Fcal_{\Ccal}\subseteq \Fcal_{\Mcal}$ achieve partial resilient leader–follower consensus.
\end{problem}

% We show that a subset of normal followers can achieve leader-follower consensus to any arbitrary reference state through BP-MSR, \textit{regardless of the global robustness of the network}.

\section{Partial Resilient Leader-Follower Consensus}
Consider a network $\Gcal[t]=(\Vcal, \Ecal[t])$ that fails to be strongly $r$-robust at time $t$. Since robustness conditions are defined over the entire network topology, that implies there exists a nonempty subset $\Fcal_{\Wcal}[t] \subset \Fcal$ whose members lack sufficient connections to satisfy the robustness conditions. However, the remaining followers in $\Fcal_{\Rcal}[t]=\Fcal \setminus \Fcal_{\Wcal}[t]$ (if nonempty) are indeed sufficiently connected. In other words, the \textit{subgraph} induced by a set $\Lcal\cup \Fcal_{\Rcal}[t]\subseteq \Vcal$ is strongly $r$-robust with respect to $\Lcal$. Using this observation, we propose a novel distributed algorithm, which we refer to as the BP-MSR algorithm (\Cref{alg:msr}). 

The key idea of the BP-MSR algorithm is that agents engage in a consensus update \textit{only at time $t$ when they can locally verify that they belong to strongly $r$-robust subgraphs.} 
For each $t\in \Zbb_{\geq 0}$, all agents $i\in \Mcal$ first compute their activation states $q_i[t]$ synchronously via BP~\eqref{eq:percolation} with $r=2F+1$ (lines 3-8). Then, leaders $i\in \Lcal$ transmit their states and update them using $f_r$ (lines 9-11), while followers $i\in \Fcal$ use $q_i[t]$ to locally decide whether to engage in consensus protocol synchronously (lines 12-21). If $q_i[t]=1$, i.e., followers $i$ belong to a strongly $(2F+1)$-robust subgraph with respect to $\Lcal\cup \Acal$ at time $t$ (shown in~\Cref{lem:partial_2f+1}), they transmit $x_i[t]$ and update $x_i[t+1]$ via the W-MSR algorithm; otherwise, they remain inactive for that time.

\begin{remark}
    The BP-MSR algorithm forces the followers $i\in \Fcal_{\Mcal}$ to send and update their states at time $t\in \Zbb_{\geq 0}$ \textit{only when they have sufficient connections to achieve resilient leader-follower consensus} (which is encoded as $q_i[t] = 1$). This selective participation enables a subset of followers to achieve partial resilient consensus while preventing non-convergent or adversarial values from propagating through the network. In fact, when the entire graph $\Gcal[t]$ is sufficiently connected, i.e., $q_i[t]=1$ $\forall i \in \Fcal_{\Mcal}$, the algorithm reduces to the standard W-MSR algorithm at time $t$. 
\end{remark}

\subsection{Analysis of the BP-MSR Algorithm}
Now, we present the analysis for our algorithm.

\begin{lemma}
Let~\Cref{assum:misreport} hold and $\Gcal[t]=(\Vcal, \Ecal[t])$ be a time-varying digraph. Suppose that each agent $i\in \Mcal$ computes $q_i[t]$ via~\eqref{eq:percolation} with $r=2F+1$. Let $\Fcal_{\Rcal}[t] =  \{i \in \Fcal_{\Mcal} \mid q_i[t]=1\}$, and also let $\Gcal_{\Rcal}[t] = (\Vcal_{\Rcal}[t], \Ecal_{\Rcal}[t])$ be a subgraph induced by $\Vcal_{\Rcal}[t]= \Acal \cup \Lcal\cup \Fcal_{\Rcal}[t]$. If $|\Fcal_{\Rcal}[t]|> 0$, $\Gcal_{\Rcal}[t]$ is strongly $(2F+1)$-robust with respect to $\Lcal\cup \Acal$ at time $t$. 
    \label{lem:partial_2f+1}
\end{lemma}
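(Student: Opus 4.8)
The plan is to verify the definition of strong $(2F+1)$-robustness directly, rather than invoking \Cref{lem:strong_r_not_mine}: its hypothesis that every agent shares and follows~\eqref{eq:percolation} truthfully fails once adversaries are present (cf.\ \Cref{remark:on_lemma}), so it cannot be applied to $\Gcal_{\Rcal}[t]$ off the shelf. Since $\Vcal_{\Rcal}[t]\setminus(\Lcal\cup\Acal)=\Fcal_{\Rcal}[t]$, it suffices to show that every nonempty $\Scal_2\subseteq\Fcal_{\Rcal}[t]$ is $(2F+1)$-reachable in $\Gcal_{\Rcal}[t]$, i.e.\ that some $i\in\Scal_2$ has at least $2F+1$ in-neighbors in the induced subgraph lying outside $\Scal_2$.

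To locate such a node, I would order the members of $\Scal_2$ by the BP iteration at which they first become active and pick $i\in\Scal_2$ that activates \emph{earliest}, say at iteration $k$. Since every follower starts inactive ($q_i[t,0]=0$), we have $k\geq 1$, and because $q_i[t,k-1]=0$ while $q_i[t,k]=1$, node $i$ must have activated through the threshold branch of~\eqref{eq:percolation}, so $\sum_{j\in\Ncal_i[t]} q_j^i[t,k-1]\geq 2F+1$. The decisive consequence of selecting the earliest-activated node is that no member of $\Scal_2$ is active at iteration $k-1$, since any such node would have activated at an iteration $\le k-1<k$.

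It then remains to show that the reporters contributing to this sum are in-neighbors of $i$ inside $\Gcal_{\Rcal}[t]$ and all lie outside $\Scal_2$. By \Cref{assum:misreport} every received activation state is binary, so the set $R=\{j\in\Ncal_i[t]\mid q_j^i[t,k-1]=1\}$ consists of at least $2F+1$ distinct in-neighbors of $i$. Split $R$ into adversarial and normal reporters: every adversarial reporter lies in $\Acal\subseteq\Vcal_{\Rcal}[t]$ and is outside $\Scal_2$ (as $\Scal_2\subseteq\Fcal_{\Rcal}[t]$); every normal reporter reports its true state, hence is genuinely active at $k-1$, and by the monotonicity built into~\eqref{eq:percolation} satisfies $q_j[t]=1$, placing it in $\Lcal\cup\Fcal_{\Rcal}[t]\subseteq\Vcal_{\Rcal}[t]$ and outside $\Scal_2$ (no $\Scal_2$ node is active at $k-1$). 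Because each $j\in R$ is an in-neighbor of $i$ in $\Gcal[t]$ with both endpoints in $\Vcal_{\Rcal}[t]$, the edge $(j,i)$ survives in $\Ecal_{\Rcal}[t]$. Thus $|(\Ncal_i[t]\cap\Vcal_{\Rcal}[t])\setminus\Scal_2|\geq|R|\geq 2F+1$, establishing $(2F+1)$-reachability of $\Scal_2$ and hence the claim.

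I expect the main obstacle to be the adversarial misreporting: an adversary $a$ may falsely send $q_a^i[t,k-1]=1$, so its contribution to the threshold sum reflects no genuine activation. The observation that dissolves this is that the argument never needs the report to be correct — it needs only the edge $(a,i)$ to exist, which is automatic since $a$ can transmit to $i$ only if $a\in\Ncal_i[t]$, and $\Acal\subseteq\Vcal_{\Rcal}[t]$ guarantees the edge is retained. It is worth noting that the bound of $2F+1$ reachable neighbors holds regardless of how $R$ splits between adversaries and normal agents, so this structural lemma does not even require $F$-locality of $\Acal$; it holds for any adversary behavior consistent with \Cref{assum:misreport}.
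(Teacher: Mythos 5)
Your proof is correct, but it takes a genuinely different route from the paper. The paper does apply \Cref{lem:strong_r_not_mine}, dissolving the truthfulness obstacle you flag with a monotonicity thought-experiment: since adversary reports are binary by \Cref{assum:misreport}, replacing every $q_a^i[t,k]$ by $1$ can only preserve the threshold inequality $\sum_{j \in \Ncal_i[t]} q_j^i[t,f-1]\geq 2F+1$ for each $i\in\Fcal_{\Rcal}[t]$, and the all-ones scenario is exactly the truthful BP process in which each $a\in\Acal$ is a leader; \Cref{lem:strong_r_not_mine} then applies to $\Gcal_{\Rcal}[t]$ with initial set $\Lcal\cup\Acal$ and yields the claim in one step. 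You instead verify the definition of strong $(2F+1)$-robustness directly: for an arbitrary nonempty $\Scal_2\subseteq\Fcal_{\Rcal}[t]$ you take the earliest-activated node, observe that its $\geq 2F+1$ reporters at the triggering iteration are all outside $\Scal_2$ (adversarial reporters because $\Acal\cap\Fcal_{\Rcal}[t]=\emptyset$, normal reporters because nothing in $\Scal_2$ is yet active) and all inside $\Vcal_{\Rcal}[t]$ (normal active reporters end up in $\Lcal\cup\Fcal_{\Rcal}[t]$ by monotonicity of~\eqref{eq:percolation}), so induced edges witness $(2F+1)$-reachability. In effect you re-prove the needed direction of \Cref{lem:strong_r_not_mine} inline, adapted to adversarial reports. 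The paper's argument is shorter and leans on cited machinery; yours is self-contained and more transparent about what is actually used --- only binariness of adversary reports (not even the $1\to 0$ monotonicity clause of \Cref{assum:misreport}), only edge existence rather than report truthfulness, and no $F$-locality --- which is a worthwhile observation, since it makes clear the lemma is purely structural.
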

\begin{proof}
By definition, for all $i\in\Fcal_{\Rcal}[t]$ there exists $\bar {k}\in\{0,\dots,f-1\}$ such that the following inequality holds: $\sum_{j \in \Ncal_i[t]\setminus \Acal} q_j^i[t,\bar{k}] + \sum_{a \in \Acal} q_a^i[t,\bar{k}] \geq 2F+1$. Because $q_a^i[t,k]\in \{0,1\}$ for all $a \in \Acal$, $i \in \Mcal$, and $k\in\{0,\dots, f\}$ by~\Cref{assum:misreport}, setting $q_a^i[t,k]=1$ for all $a\in \Acal$, $i\in \Mcal$, and $k\in\{0,\dots,f-1\}$ would still ensure that the inequality holds for the same $\bar{k}$. Note setting $q_a^i[t,k]=1$ for all $i \in \Mcal$ and $k\in\{0,\dots,f-1\}$ is equivalent to making an adversary $a\in \Acal$ a leader at time $t$. Then, applying~\Cref{lem:strong_r_not_mine} with the threshold $r=2F+1$ and initial set $\Lcal\cup \Acal$, $\Gcal_{\Rcal}[t]$ is strongly $(2F+1)$-robust with respect to $\Lcal\cup \Acal$.
% We divide our analysis into two parts: (i) when all adversaries follow~\eqref{eq:percolation} and (ii) when some of them do not. 
% (i) By~\Cref{lem:strong_r_not_mine}, $\Gcal_{\Rcal}[t]$ is strongly $(2F+1)$-robust with respect to $\Lcal$. By the definition of strong $r$-robustness, any subset $\Scal \subseteq \Vcal\setminus \Acal$ is $r$-reachable. Thus, $\Gcal_{\Rcal}[t]$ is strongly $(2F+1)$-robust with respect to $\Lcal\cup \Acal$.
% (ii) By~\Cref{assum:misreport}, $q_a^i[t,k]\in \{0,1\}$ for all $a \in \Acal$ and $i \in \Mcal$. Note for all $i \in \Fcal_{\Rcal}[t]$, $q_i[t]=1$. In other words, for all $i \in \Fcal_{\Rcal}[t]$, there exists $k'\in \{0,\dots,f-1\}$ such that $\sum_{j \in \Ncal_i[t]\setminus \Acal} q_j^i[t,k'] + \sum_{a \in \Acal} q_a^i[t,k'] \geq 2F+1$. By~\Cref{assum:misreport}, $q_a^i[t,k'']\geq q_a^i[t,k']$ for all $k''\geq k'$, implying that $q_i[t,k'']\geq q_i[t,k']$ for any $i \in \Fcal_{\Rcal}[t]$. Then, for all $i \in \Fcal_{\Rcal}[t]$, we have $q_i[t]=1$. By~\Cref{lem:strong_r_not_mine}, $\Gcal_{\Rcal}[t]$ is strongly $(2F+1)$-robust with respect to $\Lcal\cup \Acal$ at $t$. 
\end{proof}

% \begin{remark}
%     Unlike~\Cref{lem:strong_r_not_mine}, our result guarantees strong $r$-robustness with respect to the augmented set $\Lcal \cup \Acal$, rather than $\Lcal$ alone. These are due to the unpredictable behaviors of $\Acal$ as discussed in~\Cref{remark:uncertain_behavior}.
%   In~\cite{time_varying_strongly_usevitch20}, it is shown that strong $(2F+1)$-robustness with respect to $\Lcal$ ensures resilient leader-follower consensus under $F$-local adversarial models. However, under an $F$-local attack, the total influence adversaries can exert on each normal follower remains bounded, regardless of whether robustness is defined with respect to $\Lcal$ or $\Lcal \cup \Acal$. Therefore, the core arguments and guarantees presented in~\cite{time_varying_strongly_usevitch20} remain applicable even when the robustness condition is extended to include $\Acal$.
% \end{remark}

\begin{algorithm}[h]
\caption{BP-MSR Algorithm}
\label{alg:msr}
\begin{algorithmic}[1]
\State \textbf{Input:} Parameter $F$
\State \textbf{Output:} $x_i[t+1]$
% \State $q_i[t,0]\gets 0$
% \If{$i \in \Lcal$}
% \State $q_i[t,0]\gets 1$
% \EndIf
\State Initialize activation: $q_i[t,0]\gets 1$ if $i\in \Lcal$, else $0$

\For{$k\in\{1,\dots, f\}$}
\State Transmit $q_i[t,k-1]$ to its out-neighbors $u \in \Ncal_i^o[t]$
\State Update $q_i[t,k]$ through~\eqref{eq:percolation} with $r=2F+1$
\EndFor
\State $q_i[t] \gets q_i[t,f]$

\If {$i \in \Lcal$}
\State Transmit $x_i[t]$ to $u\in \Ncal_i^o[t]$
\State $x_i[t+1] \gets f_r[t]$
\ElsIf{$i \in \Fcal$ and $q_i[t]$ is $1$}
\State Transmit $x_i[t]$ to $u\in \Ncal_i^o[t]$
\State $\Ical_i[t] \gets \{j \in \Ncal_i[t] \mid \text{agent } i \text{ received } x_j^i[t]\}\cup\{i\}$
\State $\Hcal_i[t]\gets \{x_j^i[t] \mid j \in \Ical_i[t] \text{ and }x_j^i[t] > x_i[t]\}$
\State $\Ocal_i[t]\gets \{x_j^i[t] \mid j \in \Ical_i[t] \text{ and } x_j^i[t] < x_i[t]\}$
\State Sort $\Hcal_i[t]$ and $\Ocal_i[t]$; discard up to $F$ largest and 

smallest values in $\Hcal_i[t]$ and $\Ocal_i[t]$, respectively
\State Let $\Rcal_i[t]$ be the set of agents whose values are 

discarded. Then, it applies
    \eqn{x_i[t+1]=\sum_{j\in \Ical_i[t]\setminus \Rcal_i[t]} w_{ij}[t]x_j^i[t],}

    where $w_{ij}[t]$ satisfies the same conditions in~\eqref{eq:linear}, 
    
    but with $\Bcal_i[t]$ replaced by $\Ical_i[t]\setminus \Rcal_i[t]$
\Else
\State $x_i[t+1]=x_i[t]$
\EndIf
\end{algorithmic}
\end{algorithm}

The lemma shows that, through lines 3-8 of the BP-MSR algorithm, each follower $i \in \Fcal_{\Mcal}$ can use $q_i[t]$ to locally determine whether it belongs to a strongly $(2F+1)$-robust subgraph with respect to $\Lcal\cup \Acal$ at time $t$. This local verification step enables each follower to decide whether to participate in the consensus in later parts of the algorithm (lines 12-21). Now, let $m[t] = \min_{i \in \Mcal} x_i[t]$ and $M[t] = \max_{i \in \Mcal} x_i[t]$ be the minimum and maximum states of normal agents at time $t$, respectively. Then, we have:

\begin{theorem}
Let~\Cref{assum:misreport} hold. Let $\Gbb=(\Gcal[t])_{t\in\mathbb Z_{\geq 0}}$ be a sequence of time-varying digraphs with an $F$-local adversary set $\Acal$, and let $f_r[t]=C_r$ $\forall t\geq t_C$. If each $i\in \Mcal$ runs the BP-MSR algorithm with parameter $F$ for all $t\geq t_C$, 

\begin{itemize}
    \item $x_i[t] \in [m[t_C], M[t_C]], \ \forall i \in \Fcal_{\Mcal}, \ \forall t \geq t_C$.
    \item Let $\Fcal_{\Ccal} =  \{i \in \Fcal_{\Mcal} \mid \lim\sup_{t\to \infty} q_i[t]=1\}$. If $|\Fcal_{\Ccal}|>0$, followers in $\Fcal_{\Ccal}$ will achieve partial resilient leader-follower consensus.
\end{itemize}

\label{thm:main}
\end{theorem}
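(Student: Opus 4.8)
The plan is to prove the two bullets in turn, first the invariance of the normal range and then, building on it, the convergence of the infinitely-active followers.

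For the first bullet I would establish the standard W-MSR ``safety'' property under the $F$-local model. Fix $t\ge t_C$ and a normal follower $i$. If $q_i[t]=0$ then $x_i[t+1]=x_i[t]$, and there is nothing to do. If $q_i[t]=1$, then $x_i[t+1]$ is a convex combination (weights $\ge\alpha$, summing to one) of the surviving values in $\Ical_i[t]\setminus\Rcal_i[t]$. The key observation is that the values $i$ actually receives come only from transmitting agents — leaders, adversaries, and \emph{active} followers — and that, because $\Acal$ is $F$-local and $i$ discards the $F$ largest and $F$ smallest received values, any surviving received value strictly above $M[t]$ would have to be adversarial yet un-trimmed, which is impossible: at most $F$ in-neighbors are adversarial, so such a surviving extreme is dominated by a trimmed \emph{normal} value, contradicting the definition of $M[t]$. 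Hence every surviving value lies in $[m[t],M[t]]$, and so does $x_i[t+1]$. Combining this with $x_i[t+1]=C_r$ for normal leaders (and noting $C_r\in[m[t],M[t]]$ once the leaders hold it), every normal agent's next state lies in $[m[t],M[t]]$; thus $m[\cdot]$ is non-decreasing and $M[\cdot]$ non-increasing, the intervals are nested, and $x_i[t]\in[m[t_C],M[t_C]]$ for all $i\in\Fcal_\Mcal$ and $t\ge t_C$.

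For the second bullet I would first reduce to a clean regime. Since $\Fcal_\Ccal'=\Fcal_\Mcal\setminus\Fcal_\Ccal$ is finite and each of its members has $q_i[t]=0$ for all large $t$, there is $T^{*}\ge t_C$ such that for $t\ge T^{*}$ every non-convergent follower is permanently inactive — hence frozen and, crucially, \emph{non-transmitting} — and every leader holds $C_r$. Consequently $\Fcal_\Rcal[t]\subseteq\Fcal_\Ccal$ for $t\ge T^{*}$, and whenever a convergent follower updates, its W-MSR step only involves leaders, adversaries, and active convergent followers. Restricting the safety argument to these transmitting agents shows that $\tilde M[t]:=\max\!\big(C_r,\max_{i\in\Fcal_\Ccal}x_i[t]\big)$ is non-increasing and $\tilde m[t]:=\min\!\big(C_r,\min_{i\in\Fcal_\Ccal}x_i[t]\big)$ non-decreasing; let $\tilde M^{*},\tilde m^{*}$ be their limits, with $\tilde m^{*}\le C_r\le\tilde M^{*}$. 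The goal becomes $\tilde M^{*}=\tilde m^{*}=C_r$, and I argue the upper bound by contradiction (the lower being symmetric). Suppose $\tilde M^{*}>C_r$ and fix $\gamma\in(0,\tilde M^{*}-C_r)$. The engine is \Cref{lem:partial_2f+1}: at each $t\ge T^{*}$ the induced graph $\Gcal_\Rcal[t]$ is strongly $(2F+1)$-robust with respect to $\Lcal\cup\Acal$. Applying $r$-reachability to the active high set $\Scal_2(t):=\{i\in\Fcal_\Rcal[t]\mid x_i[t]>C_r+\gamma\}\subseteq\Vcal_\Rcal[t]\setminus(\Lcal\cup\Acal)$, whenever $\Scal_2(t)\neq\emptyset$ some $i^{*}\in\Scal_2(t)$ has at least $2F+1$ in-neighbors inside $\Gcal_\Rcal[t]$ but outside $\Scal_2(t)$; all transmit to $i^{*}$, at most $F$ are adversarial, so at least $F+1$ carry normal values $\le C_r+\gamma<x_{i^{*}}[t]$. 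After $i^{*}$ trims the $F$ smallest, at least one such value survives with weight $\ge\alpha$, forcing $x_{i^{*}}[t+1]\le\alpha(C_r+\gamma)+(1-\alpha)\tilde M[t]$ — a strict contraction of $i^{*}$ toward $C_r$.

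The main obstacle — and the step I would spend the most care on — is upgrading this \emph{single-node, single-step} contraction into convergence of the \emph{whole} high plateau, given that the activation schedule is only ``infinitely often'' and the robust subgraph $\Gcal_\Rcal[t]$ (hence its BP-induced layering from $\Lcal\cup\Acal$) changes every step. The difficulties are that the reachable node $i^{*}$ need not be the current maximizer, that a node which drops may later re-rise up to the prevailing maximum, and that there is no uniform bound on the gap between successive activations of a follower. I would resolve this by a peeling argument synchronized to activations: using that (by the safety property) no value ever exceeds the current $\tilde M[t]$, so the plateau cannot be replenished above its present level, I would track the BP layers of $\Gcal_\Rcal[t]$, show that layer-$1$ active followers are pulled to within $(1-\alpha)(\tilde M[t]-C_r)$ of $C_r$ upon activation, and then induct on the layer index over a finite (though not uniform) window $[t,t+W]$ in which every convergent follower activates at least once — which exists because $\Fcal_\Ccal$ is finite and each member activates infinitely often. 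Over such a window the maximum contracts by a fixed factor, $\max_{i\in\Fcal_\Ccal}x_i[t+W]-C_r\le\lambda\big(\max_{i\in\Fcal_\Ccal}x_i[t]-C_r\big)$ with $\lambda<1$ independent of $t$; iterating forces $\tilde M[t]\to C_r$, contradicting $\tilde M^{*}>C_r$. The symmetric argument gives $\tilde m^{*}=C_r$, and squeezing yields $\lim_{t\to\infty}x_i[t]=C_r=\lim_{t\to\infty}f_r[t]$ for every $i\in\Fcal_\Ccal$, i.e.\ partial resilient leader--follower consensus.
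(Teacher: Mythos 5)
Your proposal is correct to the same standard as the paper's own proof and follows essentially the same route: the identical trimming-plus-$F$-local safety argument for the first bullet, the same reduction to a time after which every non-convergent follower is permanently inactive (hence frozen and non-transmitting), and the same use of \Cref{lem:partial_2f+1} to drive a peeling/contraction argument over finite windows in which every member of $\Fcal_{\Ccal}$ activates at least once, iterated to force the normal range onto $C_r$. The one step you flag but do not actually prove --- a per-window contraction factor $\lambda<1$ \emph{independent of $t$} despite the admittedly non-uniform window lengths --- is precisely the step the paper also leaves unjustified: its concluding iteration $V[t'+(\sigma+1)\overline{T}]\le\left(1-\alpha^{\overline{T}}\right)V[t'+\sigma\overline{T}]$ silently reuses the same window length $\overline{T}$ for every $\sigma$, even though under a bare ``infinitely often'' activation hypothesis the successive windows need not be uniformly bounded, and with window lengths $\overline{T}_\sigma$ growing fast enough the product $\prod_\sigma\left(1-\alpha^{\overline{T}_\sigma}\right)$ need not vanish.
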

\begin{proof}

\textbf{First Claim:} Let $\Fcal_{\Rcal}[t] =  \{i \in \Fcal_{\Mcal} \mid q_i[t]=1\}$. Note at each time $t\geq t_C$, only $i\in \Fcal_{\Rcal}[t]$ updates its value $x_i[t+1]$ with their in-neighbors' states. If $i \notin \Fcal_{\Rcal}[t]$, then $x_i[t+1]=x_i[t]\in [m[t], M[t]]$. Now consider the case where $|\Fcal_{\Rcal}[t]|\neq 0$ and $i\in \Fcal_{\Rcal}[t]$. Let $\Gcal_{\Rcal}[t] = (\Vcal_{\Rcal}[t], \Ecal_{\Rcal}[t])$ be a subgraph of $\Gcal[t]$ induced by $\Vcal_{\Rcal}[t]= \Acal \cup \Lcal \cup \Fcal_{\Rcal}[t]$, which is strongly $(2F+1)$-robust with respect to $\Lcal\cup \Acal$ by~\Cref{lem:partial_2f+1}. Also, because follower $i \in \Fcal_{\Rcal}[t]$ has $|\Ncal_i[t]\cap \Acal|\leq F$ by the definition of $F$-local model, $|\Ncal_i[t]\cap\Fcal_{\Rcal}[t]\setminus \Acal|\geq F+1$. Then if an adversary $a\in \Acal$ has a value $x_a^i[t]> M[t]$ or $x_a^i[t]<m[t]$, it will be filtered out, since it is one of the highest or lowest values in $\Hcal_i[t]$ or $\Ocal_i[t]$. Hence, all values outside the interval $[m[t], M[t]]$ will be excluded from the update. Then, follower $i$ performs a convex combination of values within $[m[t], M[t]]$ to update $x_i[t+1]$, which implies $x_i[t+1] \in [m[t], M[t]]$. Lastly, $x_i[t]=C_r \in [m[t], M[t]]$ for all $i \in \Lcal_\Mcal$ and $t\geq t_C$. Since $x_i[t+1] \in  [m[t], M[t]]$ for all $i\in \Mcal$ and $t\geq t_C$, $x_i[t+1] \in [m[t+1], M[t+1]]\subseteq [m[t], M[t]] \subseteq \cdots \subseteq [m[t_C], M[t_C]]$.

\textbf{Second Claim:} [Monotonicity of Convergent Followers] 
 Let $\Fcal_{\Ccal}' = \Fcal_{\Mcal} \setminus \Fcal_{\Ccal}$ and $\Mcal_{\Ccal} = \Lcal_\Mcal \cup \Fcal_{\Ccal}$. Then, $\forall i \in \Fcal_{\Ccal}' \  \exists \bar t_i\geq t_C$ such that $q_i[t]=0 \ \forall t \geq \bar t_i$. Now we define $\overline m[t]=\min_{i \in \Mcal_{\Ccal}} x_i[t]$ and $\overline M[t]=\max_{i \in \Mcal_{\Ccal}} x_i[t]$ as the minimum and maximum states of normal agents in $\Mcal_{\Ccal}$ at time $t$. We first show that there exists a finite time $\bar t \geq t_C$ such that $x_i[t] \in [\overline m[\bar t],\overline M[\bar t]]$ for all $ i \in \Fcal_{\Ccal}$ and $\forall t \geq \bar t$. Let $\bar t = \max_{i\in \Fcal_{\Ccal}'} t_i<\infty$ be the time where $q_i[t]=0$ $\forall i \in \Fcal_{\Ccal}'$ $\forall t \geq \bar t$. By definition, all followers $i \in \Fcal_{\Ccal}'$ do not transmit their states to out-neighbors $\forall t\geq \bar t$. Thus, removing nodes in $\Fcal_{\Ccal}'$ from the graph $\Gcal[t]$ for all $t\geq \bar t$ does not affect our analysis on the states of $\Fcal_{\Ccal}$. Then, by the same reasoning as in the proof for the first claim, we have $x_i[t] \in [\overline m[t],\overline M[t]]\subseteq \cdots \subseteq [\overline m[\bar t],\overline M[\bar t]]$ for all $i \in \Fcal_{\Ccal}$ and $t \geq \bar t$.

[Partial Leader-Follower Consensus]
We define 
\eqnN{\Xcal_M(t, \overline{\epsilon}) &= \{i \in \Fcal_{\Ccal} \mid x_i[t]> \overline M[t] - \overline{\epsilon}\}, \\
\Xcal_m(t, \underline{\epsilon})& = \{i \in  \Fcal_{\Ccal}  \mid x_i[t]< \overline m[t] + \underline{\epsilon}\},\\
\Scal_X(t, \underline{\epsilon},\overline{\epsilon}) & = \Xcal_M(t,\overline{\epsilon})\cup \Xcal_m(t, \underline{\epsilon}).
}
The set $\Scal_X(t, \underline{\epsilon},\overline{\epsilon})$ denotes a set of convergent followers whose states are outside of $[\overline{m}[t] + \underline{\epsilon}, \overline{M}[t] - \bar{\epsilon}]$. For the rest of the proof, we show that $|\Scal_X(t, \underline{\epsilon},\overline{\epsilon})|$ decreases as $t\to \infty$ with some suitable values of $\underline{\epsilon},\overline{\epsilon}\geq0$. Then, we show $\overline{M}[t]- \overline{m}[t]\to 0$ as $t\to \infty$.

 Recall that $\Gcal_{\Rcal}[t] = (\Vcal_{\Rcal}[t], \Ecal_{\Rcal}[t])$ is the subgraph induced by $\Vcal_{\Rcal}[t]= \Lcal \cup \Acal \cup \Fcal_{\Rcal}[t]$, where $\Fcal_{\Rcal}[t] =  \{i \in \Fcal_{\Mcal} \mid q_i[t]=1\}$. If $|\Fcal_{\Rcal}[t]|>0$, by~\Cref{lem:partial_2f+1}, $\Gcal_{\Rcal}[t]$ is strongly $(2F+1)$-robust with respect to $\Lcal\cup \Acal$. For all $t\geq \bar t$ such that $|\Fcal_{\Rcal}[t]|>0$, $\Fcal_{\Ccal}\cap \Fcal_{\Rcal}[t]\neq \emptyset$ and $\Fcal_{\Ccal}'\cap \Fcal_{\Rcal}[t]=\emptyset$. Let $t'\geq \bar t$ be the earliest time where $|\Fcal_{\Rcal}[t']|>0$. Note that $t'<\infty$ as $|\Fcal_{\Ccal}|>0$. We define $\underline{\epsilon} = C_r - \overline m[t']$ and $\overline{\epsilon} = \overline M[t'] - C_r$.

 We first define a set of all followers who are adjacent to at least $F+1$ normal leaders in the sequence $\Gbb$ at any time $t\geq t'$: $\Fcal_{\Lcal} = \{i_1 \in \Fcal_{\Ccal} \mid \exists t_1\in \mathbb{Z}_{\geq 0}\ \text{ s.t. } i_1 \in \Fcal_{\Rcal}[t'+t_1], \ |\Ncal_{i_1}[t'+t_1]\cap \Lcal_{\Mcal}|\geq F+1\}$. By definitions of strong $(2F+1)$-robustness and $F$-local model, for all $t\geq t_C$ such that $|\Fcal_{\Rcal}[t]|>0$, there must exist $i_1 \in \Fcal_{\Rcal}[t]\cap \Fcal_{\Lcal}$. Since $\Fcal_{\Ccal}$ is finite and $\Fcal_{\Lcal} \subseteq \Fcal_{\Ccal}$, there must exist a finite $T_1 \geq 1$ such that $\Fcal_{\Lcal} \subseteq \bigcup_{t_1=0}^{T_1-1} \Fcal_{\Rcal}[t'+t_1]$. Therefore, for any $t_1\in [0, T_1-1]$ where $|\Fcal_{\Rcal}[t'+t_1]|>0$, 
$\Scal_{1,t_1}=\{i_1 \in \Fcal_{\Ccal} \mid \Ncal_{i_1}[t'+t_1]\cap \Lcal_{\Mcal}|\geq F+1\}$ is nonempty. Then, through the BP-MSR algorithm, follower $i_1$ will use at least one $C_r$ to update $x_{i_1}[t'+t_1+1]$. 

Due to the monotonicity of the convergent agents' states, we know $x_i[t]\in [\overline m[t'],\overline M[t']]$ for all $i\in \Fcal_\Ccal$ and $t\geq t'$. To bound agent $i_1$'s state at time $t'+t_1+1$, we assume the minimum weight possible $\alpha$ is put on $C_r$, and the maximum weight is put on $\overline m[t']$ and $\overline M[t']$. Then, we get
\eqn{x_{i_1}[t'+t_1+1] & \geq \alpha C_r + (1-\alpha) \overline{m}[t'] \geq \overline{m}[t']  + \alpha\underline{\epsilon},
}
\eqn{x_{i_1}[t'+t_1+1] & \leq \alpha  C_r + (1-\alpha) \overline{M}[t'] \leq \overline{M}[t']  - \alpha \overline{\epsilon}.
}
Extending these bounds to time $t'+T_1$, we get:
\eqnN{x_{i_1}[t'+t_1+2] & \geq \alpha x_{i_1}[t'+t_1+1] + (1-\alpha) \overline{m}[t']  \\ 
& \geq \overline{m}[t']  + \alpha^2 \underline{\epsilon}, \\
x_{i_1}[t'+t_1+3] & \geq \alpha x_{i_1}[t'+t_1+ 2] + (1-\alpha) \overline{m}[t'] \\ 
& \geq \overline{m}[t']  + \alpha^3 \underline{\epsilon}, \\
 & \vdots \\
 x_{i_1}[t'+t_1+k] & \geq \alpha x_{i_1}[t'+t_1+k-1] + (1-\alpha) \overline{m}[t'] \\ 
& \geq \overline{m}[t']  + \alpha^{k} \underline{\epsilon},
}
and 
\eqnN{x_{i_1}[t'+t_1+2] & \leq \alpha x_{i_1}[t'+t_1+1] + (1-\alpha) \overline{M}[t'] \\ 
& \leq \overline{M}[t']  - \alpha^2 \overline{\epsilon},\\
x_{i_1}[t'+t_1+3] & \leq \alpha x_{i_1}[t'+t_1+2] + (1-\alpha) \overline{M}[t'] \\ 
& \leq \overline{M}[t']  - \alpha^3 \overline{\epsilon},\\
 & \vdots  \\
 x_{i_1}[t'+t_1+k] & \leq \alpha x_{i_1}[t'+t_1+k-1] + (1-\alpha) \overline{M}[t'] \\ 
& \leq \overline{M}[t']  - \alpha^{k} \overline{\epsilon}.
}
This holds for $0\leq t_1< t_1+k\leq T_1$. Let $\Kcal_1 =\{i_1 \in \Fcal_{\Ccal}\mid x_{i_1}[t'+T_1] \in [\overline m[t']+\alpha^{T_1}\underline{\epsilon}, \overline M[t']-\alpha^{T_1}\overline{\epsilon}]\}$. Since $\alpha \in [0,1]$, we have $x_{i_1}[t'+T_1] \in [\overline m[t'] +\alpha^{T_1} \underline{\epsilon}, \overline M[t'] -\alpha^{T_1} \overline{\epsilon}] \ \forall i_1 \in \Scal_1 = \cup_{t_1=0}^{T_1-1} \Scal_{1,t_1}$. Then, because (i) $\Scal_1\neq \emptyset$ , (ii) $\Scal_1 \subseteq \Kcal_1$, and (iii) $\Scal_1 \subseteq \Fcal_{\Ccal}$, $|\Scal_X(t'+T_1, \alpha^{T_1}\underline{\epsilon},\alpha^{T_1}\overline{\epsilon})|< |\Fcal_{\Ccal}|$. 

We now show that there exists a finite $T_2 \geq T_1+1$ such that $|\Scal_X(t'+T_2, \alpha^{T_2}\underline{\epsilon},\alpha^{T_2}\overline{\epsilon})|<|\Scal_X(t'+T_1, \alpha^{T_1}\underline{\epsilon},\alpha^{T_1}\overline{\epsilon})|$. Let $\Kcal_2 =\{i_2 \in \Fcal_{\Ccal}\mid x_{i_2}[t'+T_2] \in [\overline m[t']+\alpha^{T_2}\underline{\epsilon}, \overline M[t']-\alpha^{T_2}\overline{\epsilon}]\}$. Let $\Dcal_{i_2}[t'+t_2] = \Ncal_{i_2}[t'+t_2]\cap\Fcal_{\Rcal}[t'+t_2] \setminus \Scal_X(t'+t_2, \alpha^{t_2}\underline{\epsilon},\alpha^{t_2}\overline{\epsilon})$ for any $t_2 \in [T_1,T_2-1]$. First, we characterize $T_2$: let $T_2\geq T_1+1$ be the earliest time such that $\Kcal_1 \subseteq \Kcal_2$. Because all $i \in \Fcal_{\Mcal}$ always use their own states to update, the bounds on the states of followers $i_1 \in \Kcal_1$ for $t \in [t'+t_2+1, T_2]$ are:
\eqn{x_{i_1}[t'+t_2+1] & \geq \alpha x_{i_1}[t'+t_2] + (1-\alpha) \overline{m}[t'] \nonumber \\ 
& \geq \overline{m}[t']  + \alpha^{t_2+1} \underline{\epsilon}, \nonumber\\
x_{i_1}[t'+t_2+2] & \geq \alpha x_{i_1}[t'+t_2+ 1] + (1-\alpha) \overline{m}[t'] \nonumber\\ 
& \geq \overline{m}[t']  + \alpha^{t_2+2} \underline{\epsilon}, \nonumber\\
 & \vdots \nonumber\\
 x_{i_1}[t'+t_2+k] & \geq \alpha x_{i_1}[t'+t_2+k-1] + (1-\alpha) \overline{m}[t'] \nonumber\\ 
& \geq \overline{m}[t']  + \alpha^{t_2+k} \underline{\epsilon},
\label{eq:lower_bounds}
}
and 
\eqn{x_{i_1}[t'+t_2+1] & \leq \alpha x_{i_1}[t'+t_2] + (1-\alpha) \overline{M}[t'] \nonumber\\ 
& \leq \overline{M}[t']  - \alpha^{t_2+1} \overline{\epsilon},\nonumber\\
x_{i_1}[t'+t_2+2] & \leq \alpha x_{i_1}[t'+t_2+1] + (1-\alpha) \overline{M}[t'] \nonumber\\ 
& \leq \overline{M}[t']  - \alpha^{t_2+2} \overline{\epsilon},\nonumber\\
 & \vdots \nonumber \\
 x_{i_1}[t'+t_2+k] & \leq \alpha x_{i_1}[t'+t_1+k-1] + (1-\alpha) \overline{M}[t'] \nonumber\\ 
& \leq \overline{M}[t']  - \alpha^{t_2+k} \overline{\epsilon}.
\label{eq:upper_bounds}
}
Note the bound holds for $T_1\leq t_2<t_2+k\leq T_2$. Then, we know $x_{i_1}[t'+T_2] \in [\overline m[t']+\alpha^{T_2}\underline{\epsilon}, \overline M[t']-\alpha^{T_2}\overline{\epsilon}]$ $\forall i_1 \in \Kcal_1$, and thus $\Kcal_1 \subseteq \Kcal_2$. Since all $i_1\in \Kcal_1 \subseteq \Fcal_{\Ccal}$ must belong to $\Fcal_{\Rcal}[t'+t_2]$ at some $t_2 \in [T_1, T_2-1]$, such a finite $T_2$ exists.

Furthermore, by definition, $\Gcal_{\Rcal}[t]$ is strongly $(2F+1)$-robust with respect to $\Lcal\cup \Acal$ by~\Cref{lem:partial_2f+1} at $t$ if $|\Fcal_{\Rcal}[t]|>0$. Therefore, for any $t_2\in [T_1, T_2-1]$ where $|\Fcal_{\Rcal}[t'+t_2]|>0$, 
$\Scal_{2,t_2}=\{i_2 \in \Scal_X(t'+t_2, \alpha^{t_2}\underline{\epsilon},\alpha^{t_2}\overline{\epsilon}) \mid |\Dcal_{i_2}[t'+t_2]|\geq 2F+1\}$ is nonempty if $\Scal_X(t'+t_2, \alpha^{t_2}\underline{\epsilon},\alpha^{t_2}\overline{\epsilon})\neq \emptyset$. If it is nonempty, since $\Acal$ is $F$-local, $\Dcal_{i_2}[t'+t_2]$ contains at least $F+1$ normal agents $\forall i_2 \in \Scal_{2,t_2}$. Then, by definition of $\Scal_X(t'+t_2, \alpha^{t_2}\underline{\epsilon},\alpha^{t_2}\overline{\epsilon})$, $x_{i_2}[t'+t_2]>x_j^{i_2}[t'+t_2]$ or $x_{i_2}[t'+t_2]<x_j^{i_2}[t'+t_2]$ $\forall j \in \Dcal_{i_2}[t'+t_2]$. This means $i_2$ will use at least one in-neighbor's state within $[\overline{m}[t']+\alpha^{t_2}\underline{\epsilon},\overline{M}[t']-\alpha^{t_2}\overline{\epsilon} ]$ to update $x_{i_2}[t'+t_2+1]$, whose bounds are $x_{i_2}[t'+t_2+1]\in [\overline m[t'] +\alpha^{t_2+1} \underline{\epsilon}, \overline M[t'] -\alpha^{t_2+1} \overline{\epsilon}]$ by~\eqref{eq:lower_bounds} and~\eqref{eq:upper_bounds}, which is also contained by $[\overline m[t'] +\alpha^{T_2} \underline{\epsilon}, \overline M[t'] -\alpha^{T_2} \overline{\epsilon}]$. 
% \eqn{
% x_{i_3}[t'+t_3+1] & \geq \alpha (\overline m[t'] +\alpha^{T_2} \underline{\epsilon}) + (1-\alpha)\overline m[t']\\
% & \geq  \overline m[t'] +\alpha^{T_2+1} \underline{\epsilon}\\
% & \geq  \overline m[t'] +\alpha^{T_3} \underline{\epsilon},
% \label{eq:low_bounds2}
% }
% \eqn{
% x_{i_3}[t'+T_3+1]&\leq \alpha (\overline M[t'] -\alpha^{T_2} \overline{\epsilon}) + (1-\alpha)\overline M[t']\\
% & \leq  \overline M[t'] -\alpha^{T_2+1} \overline{\epsilon}\\
% & \leq  \overline M[t'] -\alpha^{T_3} \overline{\epsilon}.
% \label{eq:upper_bounds2}
% }
Now denote $\Scal_2 = \cup_{t_2=T_1}^{T_2-1}  \Scal_{2,t_2}$. Because (i) $\Scal_2 \subseteq \Kcal_2$, (ii) $\Kcal_1 \subseteq \Kcal_2$, (iii) $\Scal_{2}\not\subseteq \Kcal_1$, and (iv) $i_2 \notin \Scal_X(t'+T_2, \alpha^{T_2}\underline{\epsilon}, \alpha^{T_2}\overline{\epsilon}) \ \forall i_2 \in \Kcal_2$, we get $|\Scal_X(t'+T_2, \alpha^{T_2}\underline{\epsilon},\alpha^{T_2}\overline{\epsilon})|<|\Scal_X(t'+T_1, \alpha^{T_1}\underline{\epsilon},\alpha^{T_1}\overline{\epsilon})|$.

This logic of the existence of finite time $T_p\geq T_{p-1}+1$ such that $|\Scal_X(t'+T_p, \alpha^{T_p}\underline{\epsilon},\alpha^{T_p}\overline{\epsilon})|<|\Scal_X(t'+T_{p-1}, \alpha^{T_{p-1}}\underline{\epsilon},\alpha^{T_{p-1}}\overline{\epsilon})|$ can be continued iteratively for $p\geq 2$. We first define $\Kcal_p =\{i_p \in \Fcal_{\Rcal}[t'+T_p]\mid x_{i_p}[t'+T_p] \in [\overline m[t']+\alpha^{T_{p}}\underline{\epsilon}, \overline M[t']-\alpha^{T_{p}}\overline{\epsilon}]\}$. Furthermore, we define $\Dcal_{i_p}[t'+t_p] = \Ncal_{i_p}[t'+t_p]\cap\Fcal_{\Rcal}[t'+t_p] \setminus \Scal_X(t'+t_p, \alpha^{t_p}\underline{\epsilon},\alpha^{t_p}\overline{\epsilon})$ for any $t_p \in [T_{p-1},T_p-1]$. Let $T_p\geq T_{p-1}+1$ be the earliest time where $\Kcal_{p-1}\subseteq \Kcal_p$. Using the prior arguments, we conclude that $x_{i_{p-1}}[t'+T_p] \in [\overline m[t'] +\alpha^{T_p} \underline{\epsilon}, \overline M[t'] -\alpha^{T_p} \overline{\epsilon}]$ $\forall i_{p-1} \in \Kcal_{p-1}$, and thus $\Kcal_{p-1} \subseteq \Kcal_p$. Because all $i_{p-1}\in \Kcal_{p-1} \subseteq \Fcal_{\Ccal}$ must belong to $\Fcal_{\Rcal}[t'+t_p]$ at some $t_p \in [T_{p-1}, T_p-1]$, such finite $T_p$ exists.

Furthermore, using the prior arguments, for any $t_p\in [T_{p-1}, T_p-1]$ where $|\Fcal_{\Rcal}[t'+t_p]|>0$, we can have nonempty
$\Scal_{p,t_p}=\{i_p \in \Scal_X(t'+t_p, \alpha^{t_p}\underline{\epsilon},\alpha^{t_p}\overline{\epsilon}) \mid |\Dcal_{i_p}[t'+t_p]|\geq 2F+1\}$. If it is nonempty, since $\Acal$ is $F$-local, $\Dcal_{i_p}[t'+t_p]$ contains at least $F+1$ normal agents $\forall i_p \in \Scal_{p,t_p}$. Then, by definition of $\Scal_X(t'+t_p, \alpha^{t_p}\underline{\epsilon},\alpha^{t_p}\overline{\epsilon})$, $x_{i_p}[t'+t_p]>x_j^{i_p}[t'+t_p]$ or $x_{i_p}[t'+t_p]<x_j^{i_p}[t'+t_p]$ $\forall j \in \Dcal_{i_p}[t'+t_p]$. This means $i_p$ will use at least one in-neighbor's state within $[\overline{m}[t']+\alpha^{t_p}\underline{\epsilon},\overline{M}[t']-\alpha^{t_p}\overline{\epsilon} ]$ to update $x_{i_p}[t'+t_p+1]$, whose bounds are $x_{i_p}[t'+t_p+1]\in [\overline m[t'] +\alpha^{t_p+1} \underline{\epsilon}, \overline M[t'] -\alpha^{t_p+1} \overline{\epsilon}]$ with the similar reasoning as~\eqref{eq:lower_bounds} and~\eqref{eq:upper_bounds}, which is also contained by $[\overline m[t'] +\alpha^{T_p} \underline{\epsilon}, \overline M[t'] -\alpha^{T_p} \overline{\epsilon}]$. Now denote $\Scal_p = \cup_{t_p=T_{p-1}}^{T_p-1}  \Scal_{p,t_p}$. Then, because (i) $\Scal_p \subseteq \Kcal_p$, (ii) $\Kcal_{p-1} \subseteq \Kcal_p$, (iii) $\Scal_{p}\not\subseteq \Kcal_{p-1}$, and (iv) $i_p \notin \Scal_X(t'+T_p, \alpha^{T_p}\underline{\epsilon}, \alpha^{T_p}\overline{\epsilon}) \ \forall i_p \in \Kcal_p$, we get $|\Scal_X(t'+T_p, \alpha^{T_p}\underline{\epsilon},\alpha^{T_p}\overline{\epsilon})|<|\Scal_X(t'+T_{p-1}, \alpha^{T_{p-1}}\underline{\epsilon},\alpha^{T_{p-1}}\overline{\epsilon})|$.

Since $\Fcal_{\Ccal}$ is finite, there must exist a finite time $\overline{T}\geq 1$ such that $\Scal_X(t'+\overline{T}, \alpha^{\overline{T}}\underline{\epsilon},\alpha^{\overline{T}}\overline{\epsilon})=\emptyset$. Then, $\forall i \in \Fcal_{\Ccal}$,
\eqnN{
\overline m[t']+\alpha^{\overline{T}}\underline{\epsilon}\leq x_i[t'+\overline{T}]\leq \overline M[t']-\alpha^{\overline{T}}\overline{\epsilon}.}

Considering $V[t]=\overline M[t]-\overline m[t]$, we get:
\eqn{V[t'+\overline{T}]&=\overline M[t'+\overline{T}]-\overline m[t'+\overline{T}]  \nonumber\\
& \leq \overline M[t']-\alpha^{\overline{T}}\overline{\epsilon} - \left(\overline m[t']+\alpha^{\overline{T}}\underline{\epsilon}\right)\nonumber \\
& = V[t'] - \alpha^{\overline{T}}\left(\overline{\epsilon}+\underline{\epsilon}\right) \label{eq:almost_there}.}

Using the fact that $\underline{\epsilon}=C_r-\overline m[t']$ and $\overline{\epsilon}=\overline M[t']-C_r$, from~\eqref{eq:almost_there}, we get
\eqnN{V[t'+\overline{T}]
 &\leq V[t']  - \alpha^{\overline{T}}V[t']\\
 & = \left(1-\alpha^{\overline{T}}\right)V[t'].
}
The above analysis can be repeated to show that 
\eqnN{V[t'+(\sigma+1)\overline{T}] \leq \left(1-\alpha^{\overline{T}}\right)V[t'+\sigma\overline{T}],
}
for $\sigma\in \mathbb Z_{\geq 0}$, which can be expressed as:
\eqn{V[t'+(\sigma+1)\overline{T}] \leq \left(1-\alpha^{\overline{T}}\right)^{(\sigma +1)}V[t'].
\label{eq:convergence_equation}
}
Since $\left(1-\alpha^{\overline{T}}\right)<1$, $V[t]$ converges to zero as $t=t'+(\sigma+1)\overline{T}\to \infty$, which completes the proof.
\end{proof}

\begin{remark}
The proof of~\Cref{thm:main} follows the similar approach of~\cite[Theorem 1]{time_varying_strongly_usevitch20}, but with a key distinction. While~\cite[Theorem 1]{time_varying_strongly_usevitch20} assumes that the \textit{entire} graph is strongly $(2F+1)$-robust with respect to $\Lcal$ over a fixed time period, our analysis assumes that some \textit{individual followers} become active and thus belong to a strongly $(2F+1)$-robust subgraph infinitely often.
\end{remark}
\Cref{thm:main} shows that any follower that is active (and thus is part of a strongly $(2F+1)$-robust subgraph) infinitely often achieves partial leader-follower consensus. Intuitively, convergence requires continual state updates while being sufficiently connected to filter out adversarial values —-- a condition ensured by the BP-MSR algorithm only for followers who are active infinitely often. This allows us to characterize the convergent set without assuming robustness of the entire network. Note, however, that belonging to a strongly $(2F+1)$-robust subgraph infinitely often is necessary but not sufficient for convergence, since adversaries can indirectly manipulate normal followers' activation states by misreporting their own states (as discussed in~\Cref{remark:on_lemma}).

Finally, the BP-MSR algorithm also guarantees safety for all non-convergent followers by ensuring that their states remain within the convex hull of normal agents whenever $f_r[t]$ is constant. This is because followers transmit and update their states only when they are active. Therefore, in the worst case, all followers may be non-convergent (i.e., $\Fcal_{\Ccal}=\emptyset$), but their states remain bounded.

Now we show that the BP-MSR algorithm also ensures full resilient leader-follower consensus under certain conditions:

\begin{corollary}
Let all the conditions in~\Cref{thm:main} hold. If $\Fcal_{\Ccal}=\Fcal_{\Mcal}$, followers in $\Fcal_{\Mcal}$ will achieve full resilient leader-follower consensus.
\end{corollary}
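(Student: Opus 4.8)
The plan is to observe that the corollary is essentially a specialization of \Cref{thm:main}, so no new machinery is needed. First I would note that the paper already assumes $\Fcal_{\Mcal} = \Fcal \cap \Mcal \neq \emptyset$, so the hypothesis $\Fcal_{\Ccal} = \Fcal_{\Mcal}$ forces $|\Fcal_{\Ccal}| = |\Fcal_{\Mcal}| > 0$. This is precisely the condition $|\Fcal_{\Ccal}| > 0$ under which the second claim of \Cref{thm:main} applies, so I can invoke that claim directly.

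The second step is to unpack what that claim delivers. Under all the hypotheses of \Cref{thm:main} (which are assumed to hold here), the second claim states that the followers in $\Fcal_{\Ccal}$ achieve partial resilient leader-follower consensus, i.e. $\lim_{t\to\infty}\norm{x_i[t]-f_r[t]}=0$ for every $i \in \Fcal_{\Ccal}$. Substituting the hypothesis $\Fcal_{\Ccal} = \Fcal_{\Mcal}$ into this conclusion yields $\lim_{t\to\infty}\norm{x_i[t]-f_r[t]}=0$ for every $i \in \Fcal_{\Mcal}$. Comparing this with the definition of full resilient leader-follower consensus, which requires exactly this limit to hold for all $i \in \Fcal_{\Mcal}$, I conclude that full resilient leader-follower consensus is attained.

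There is no genuine obstacle in this argument; the only thing to verify is that the quantifier over the convergent set in the partial-consensus definition coincides, when $\Fcal_{\Ccal} = \Fcal_{\Mcal}$, with the quantifier over all normal followers in the full-consensus definition. Since the two limit conditions are identical once the index sets agree, the corollary follows immediately from \Cref{thm:main} and the two definitions, and the proof amounts to a one-line application followed by this set-equality remark.
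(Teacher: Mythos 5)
Your proposal is correct and matches the paper's treatment: the paper states this corollary without any explicit proof, treating it exactly as you do --- an immediate specialization of the second claim of \Cref{thm:main}, where the hypothesis $\Fcal_{\Ccal}=\Fcal_{\Mcal}$ (together with the standing assumption $\Fcal_{\Mcal}\neq\emptyset$, which gives $|\Fcal_{\Ccal}|>0$) makes the partial-consensus conclusion coincide with the definition of full resilient leader-follower consensus. Your added remark that the quantifiers over initial states and adversarial inputs are identical in the two definitions, so only the index sets need to agree, is exactly the right verification.
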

\begin{remark}
The corollary shows that if there exists an infinite sequence of finite time instants $\{\tau_i\}_{i\in \Zbb_{\geq 0}}$ such that $\bigcup_{t=\tau_{i}}^{\tau_{i+1}} \Gcal[t]$ has all normal followers activated via~\eqref{eq:percolation} (which implies that the union of graph is strongly $(2F+1)$-robust), then the normal followers can achieve full resilient leader-follower consensus. This result in a way generalizes~\cite[Theorem 1]{time_varying_strongly_usevitch20}, which requires the entire network to be strongly $(2F+1)$-robust over every fixed window of $T$ time steps.
 % In contrast, our framework accommodates both partial and full consensus, while also allowing the activation intervals to vary in length.

\end{remark}

\subsection{Convergent Set Analysis}
\label{sec:convergent_set}
While~\Cref{thm:main} gives conditions for convergence, determining the exact convergent set $\Fcal_{\Ccal}$ as defined in~\Cref{thm:main} a priori is challenging, as it may depend on adversaries' activation states. We therefore provide a more concrete characterization of $\Fcal_{\Ccal}$ by identifying its subset and superset.

\begin{lemma}
Let all conditions in~\Cref{thm:main} hold. Define $N_1$ and $N_2$ as the numbers of convergent followers when all adversaries $a \in \Acal$ share $q_a^{u}[t,k]=0$ and $q_a^{u}[t,k]=1$, respectively, with their out-neighbors $u\in \Ncal_a^o[t]$, for all $k \in\{0,\dots,f-1\}$ and $t \in \Zbb_{\geq 0}$. Then $N_1\leq |\Fcal_{\Ccal}|\leq N_2$. 
\label{lem:extreme_case}
\end{lemma}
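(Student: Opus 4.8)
The plan is to reduce the statement to a monotonicity property of bootstrap percolation with respect to the adversaries' reported activation states. The key observation is that the convergent set is characterized purely in terms of activation states: by \Cref{thm:main}, $\Fcal_{\Ccal} = \{i \in \Fcal_{\Mcal} \mid \limsup_{t\to\infty} q_i[t]=1\}$, and this quantity depends on the adversaries only through the reports $q_a^u[t,k]$, not through their consensus states. The all-zeros profile ($q_a^u[t,k]=0$) and the all-ones profile ($q_a^u[t,k]=1$) are precisely the pointwise minimum and maximum over all reporting strategies admissible under \Cref{assum:misreport} (each is binary-valued and monotone non-decreasing in $k$, hence admissible). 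Thus it suffices to show that the end-of-percolation state $q_i[t]$ of every normal follower is monotone non-decreasing in the adversaries' reports, which will sandwich the actual $\Fcal_{\Ccal}$ between the two extremal sets.

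First I would establish the monotonicity claim. Fix a time $t$ and consider two admissible reporting profiles $\{q_a^u[t,k]\}$ and $\{\tilde q_a^u[t,k]\}$ with $q_a^u[t,k]\le \tilde q_a^u[t,k]$ for all $a\in\Acal$, $u\in\Ncal_a^o[t]$, and $k$. Let $q_i[t,k]$ and $\tilde q_i[t,k]$ denote the resulting activation states of the normal nodes. I would prove $q_i[t,k]\le \tilde q_i[t,k]$ for every normal $i$ and every $k\in\{0,\dots,f\}$ by induction on $k$. The base case $k=0$ holds with equality, since normal leaders and followers have fixed initial states independent of the adversaries. For the inductive step, the percolation rule~\eqref{eq:percolation} is monotone in its inputs: splitting the threshold sum $\sum_{j\in\Ncal_i[t]} q_j^i[t,k-1]$ into normal and adversarial neighbors, the normal contributions increase by the induction hypothesis (normal agents report truthfully) and the adversarial contributions increase by assumption, so the sum can only grow; combined with $q_i[t,k-1]\le\tilde q_i[t,k-1]$ this yields $q_i[t,k]\le\tilde q_i[t,k]$. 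Taking $k=f$ gives $q_i[t]\le\tilde q_i[t]$ for all normal followers and all $t$.

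Applying this with the extremal profiles, I would let $q_i^{(0)}[t]$ and $q_i^{(1)}[t]$ be the activation states under the all-zeros and all-ones strategies, respectively, and $q_i[t]$ the state under the actual strategy. Since every admissible report satisfies $0\le q_a^u[t,k]\le 1$, monotonicity gives $q_i^{(0)}[t]\le q_i[t]\le q_i^{(1)}[t]$ for all normal followers $i$ and all $t$. Because these sequences are binary, a pointwise inequality transfers to the $\limsup$: if $\limsup_{t} q_i^{(0)}[t]=1$ then $q_i^{(0)}[t]=1$ infinitely often, hence $q_i[t]=1$ infinitely often and $\limsup_{t} q_i[t]=1$; the analogous implication holds between $q_i[t]$ and $q_i^{(1)}[t]$. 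Writing $\Fcal_{\Ccal}^{(0)}$ and $\Fcal_{\Ccal}^{(1)}$ for the convergent sets under the two extremal strategies, this establishes the inclusions $\Fcal_{\Ccal}^{(0)}\subseteq \Fcal_{\Ccal}\subseteq \Fcal_{\Ccal}^{(1)}$, and taking cardinalities over these finite sets gives $N_1=|\Fcal_{\Ccal}^{(0)}|\le |\Fcal_{\Ccal}|\le |\Fcal_{\Ccal}^{(1)}|=N_2$.

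The main obstacle is carrying out the monotonicity induction cleanly, in particular keeping track of which neighbors report truthfully and verifying that the extremal profiles are admissible under \Cref{assum:misreport} so that the sandwich is over the full feasible set of adversarial behaviors; the remaining steps (the $\limsup$ transfer from the pointwise bound and the cardinality comparison) are routine.
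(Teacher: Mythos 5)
Your proof is correct and takes essentially the same route as the paper: both reduce the claim to monotonicity of the bootstrap-percolation outcome~\eqref{eq:percolation} in the adversaries' binary reports (admissible under \Cref{assum:misreport}) and sandwich the actual reporting strategy between the all-zeros and all-ones profiles. Your explicit induction on $k$ and the set-inclusion/$\limsup$ transfer make rigorous two steps the paper's proof merely asserts, namely that monotonicity of the reports propagates through the normal agents' states to iteration $f-1$, and that the pointwise bound on $q_i[t]$ yields containment of the convergent sets rather than just a per-time cardinality comparison.
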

\begin{proof}
Let $\Fcal_{\Rcal}[t] =  \{i \in \Fcal_{\Mcal} \mid q_i[t]=1\}$. Note that $i \in \Mcal$ strictly follows~\eqref{eq:percolation} to update its activation state $q_i[t,k]$ for all iterations. By~\Cref{assum:misreport}, we know $q_a^{u}[t,k]\in\{0,1\}$ for any $a \in \Acal$. Then, for a given $\Gbb$, $|\Fcal_{\Rcal}[t]|$ is determined by whether $a \in \Acal$ shares $q_a^{u}[t,k]=1$ or $q_a^{u}[t,k]=0$ in each $k\in\{0,\dots,f-1\}$. By the protocol~\eqref{eq:percolation}, for any $i \in \Fcal_{\Mcal}$, $q_i[t]=1$ if there exists $\bar k\in\{0,\dots,f-1\}$ such that $\sum_{j \in \Ncal_i[t]\setminus \Acal} q_j^i[t,\bar k] + \sum_{a \in \Acal} q_a^i[t,\bar k] \geq 2F+1$. Therefore, $|\Fcal_{\Rcal}[t]|$ at time $t$ is minimized when all adversaries share $q_a^{u}[t,k]=0$ for every $k\in\{0,\dots,f-1\}$ and maximized when they share $q_a^{u}[t,k]=1$ for every $k$. Hence, all $a\in \Acal$ sharing $q_a^{u}[t,k]=0$ and $q_a^{u}[t,k]=1$ $\forall  k\in\{0,\dots,f-1\}$ and $\forall  t \in \Zbb_{\geq 0}$ minimize and maximize $|\Fcal_{\Ccal}|$, respectively. 
\end{proof}

% \begin{prop} 
% \label{prop:subset_superset}Let all conditions in~\Cref{thm:main} hold. Let $\Fcal_{\Ccal}^1 = \{i \in \Fcal_{\Mcal}\mid \lim\sup_{t\to \infty}q_i[t]=1\}$ if all adversaries $i\in \Acal$ share $q_i^{u}[t,k]=0$ $\forall u \in \Ncal_i^o[t]$ $\forall k\in[0,f]$ $\forall t\geq 0$. Also let $\Fcal_{\Ccal}^2 = \{i \in \Fcal_{\Mcal}\mid \lim\sup_{t\to \infty} q_i[t]=1\}$ if all adversaries $i\in \Acal$ share $q_i^{u}[t,k]=0$ $\forall u \in \Ncal_i^o[t]$ $\forall k\in[0,f]$ $\forall t\geq 0$. Then, $\Fcal_{\Ccal}^1 \subseteq \Fcal_{\Ccal}\subseteq \Fcal_{\Ccal}^2$.

% \end{prop}
\Cref{lem:extreme_case} characterizes the two extreme scenarios in which the convergent set is minimized or maximized, based on how adversaries behave. Using this lemma, we provide a constructive method to identify its subset and superset:
\begin{prop} 
\label{prop:subset_superset}Let all conditions in~\Cref{thm:main} hold. For each time $t$, let $\Vcal^1[t]\subseteq \Mcal$ and $\Vcal^2[t]\subseteq \Vcal$ denote two different sets of nodes such that the induced subgraphs $\Gcal^1[t] = (\Vcal^1[t], \Ecal^1[t])$ and $\Gcal^2[t] = (\Vcal^2[t], \Ecal^2[t])$ are strongly $(2F+1)$-robust with respect to $\Lcal_{\Mcal}$ and $\Lcal\cup \Acal$, respectively. For each node $i \in \Vcal$, let $\Tcal^j_i= \{ t \geq t_C \mid i \in \Vcal^j[t] \}$, and let $\Fcal_{\Ccal}^j=\{z \in \Fcal_{\Mcal} \mid |\Tcal_z^j|=\infty\}$, $j=1,2$. Then, $\Fcal_{\Ccal}^1 \subseteq \Fcal_{\Ccal}\subseteq \Fcal_{\Ccal}^2$. 
\end{prop}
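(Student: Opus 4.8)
The plan is to reduce both inclusions to \emph{per-time} comparisons of the bootstrap-percolation (BP) activation, exploiting that by \Cref{thm:main} membership in $\Fcal_{\Ccal}$ is exactly the condition $\limsup_{t\to\infty}q_z[t]=1$, i.e.\ $q_z[t]=1$ for infinitely many $t\ge t_C$. Thus $\Fcal_{\Ccal}^1\subseteq\Fcal_{\Ccal}$ will follow once I show that every normal follower lying in $\Vcal^1[t]$ satisfies $q_z[t]=1$ at that same $t$, and $\Fcal_{\Ccal}\subseteq\Fcal_{\Ccal}^2$ once I show $q_z[t]=1$ forces $z\in\Vcal^2[t]$. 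The engine for both is the monotonicity of the update rule~\eqref{eq:percolation}: raising any received value $q_j^i[t,k]$ from $0$ to $1$ can only create additional activations, while \Cref{lem:strong_r_not_mine} converts each strong-robustness hypothesis on $\Gcal^1[t]$, $\Gcal^2[t]$ into the statement that all followers of that induced subgraph activate under truthful BP.

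For the subset inclusion I would fix an \emph{arbitrary} admissible $\Vcal^1[t]\subseteq\Mcal$ and run BP on $\Gcal^1[t]$ with source $\Lcal_{\Mcal}$ and threshold $2F+1$. Since $\Gcal^1[t]$ is strongly $(2F+1)$-robust with respect to $\Lcal_{\Mcal}$ and contains only normal (hence truthful) nodes, \Cref{lem:strong_r_not_mine} gives that every $z\in\Vcal^1[t]\cap\Fcal_{\Mcal}$ activates in this restricted process. I then argue by induction on the BP iteration $k$ that the restricted activation set is contained in the true full-graph set $\{i\in\Mcal:q_i[t,k]=1\}$: the base case agrees because both initialize on $\Lcal_{\Mcal}$, and in the inductive step any node newly activated inside $\Vcal^1[t]$ has at least $2F+1$ active in-neighbors that are normal and therefore report their already-true states faithfully, so the same node clears the threshold in the full graph no matter what the adversaries report. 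Hence $q_z[t]=1$ for every such $z$, and $z\in\Vcal^1[t]$ infinitely often yields $z\in\Fcal_{\Ccal}$. This uses no maximality of $\Vcal^1[t]$, so it holds for every admissible choice.

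For the superset inclusion I would run the dual comparison against the BP obtained by pinning every adversary to $q_a^u\equiv1$, i.e.\ the maximal-report process underlying \Cref{lem:extreme_case}. A monotonicity induction mirroring the subset case shows the true activation is dominated: $q_z[t]=1$ implies $z$ is activated in the maximal-report BP, and by \Cref{lem:strong_r_not_mine} the nodes activated there form \emph{the maximal} induced subgraph that is strongly $(2F+1)$-robust with respect to $\Lcal\cup\Acal$. Taking $\Vcal^2[t]$ equal to that set delivers $q_z[t]=1\Rightarrow z\in\Vcal^2[t]$, hence $\Fcal_{\Ccal}\subseteq\Fcal_{\Ccal}^2$.

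The main obstacle is precisely this superset direction for an \emph{arbitrary} admissible $\Vcal^2[t]$. Applying \Cref{lem:strong_r_not_mine} to $\Gcal^2[t]$ only certifies that the followers \emph{inside} $\Vcal^2[t]$ activate under maximal reporting — a containment of $\Vcal^2[t]$ within the maximal activated set, which is the wrong direction for a superset. In fact a vacuously admissible choice such as $\Vcal^2[t]=\Lcal\cup\Acal$ contains no followers yet leaves $\Fcal_{\Ccal}$ possibly nonempty, so the inclusion cannot hold for \emph{every} admissible $\Vcal^2[t]$ without qualification. The resolution I would adopt is to read (or hypothesize) $\Vcal^2[t]$ as the maximal robust subgraph with respect to $\Lcal\cup\Acal$, equivalently the maximal-report activation set; the dominance step above is then exactly what certifies $q_z[t]=1\Rightarrow z\in\Vcal^2[t]$. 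Making this dominance rigorous, and pinning down the maximality that the superset genuinely needs while the subset needs none, is the crux of the argument.
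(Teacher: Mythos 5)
Your proposal is correct in substance and follows the same skeleton as the paper's own proof --- per-time application of \Cref{lem:strong_r_not_mine} combined with \Cref{thm:main}'s characterization of $\Fcal_{\Ccal}$ via $\limsup_{t\to\infty} q_i[t]=1$ --- but it is more rigorous in two places where the paper is loose, and your diagnosis of the superset direction is a genuine catch. For $\Fcal_{\Ccal}^1 \subseteq \Fcal_{\Ccal}$, the paper routes through the extreme scenario where all adversaries report $q_a^u[t,k]=0$ and then invokes \Cref{lem:extreme_case}; but \Cref{lem:extreme_case} only bounds cardinalities ($N_1\leq |\Fcal_{\Ccal}|\leq N_2$) and cannot by itself transfer membership from the extreme scenario to the actual adversary behavior. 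Your explicit monotonicity induction on the BP iteration $k$ --- the restricted process on $\Gcal^1[t]$ is dominated by the full-graph process regardless of adversary reports, since the threshold in \eqref{eq:percolation} counts only active reports --- establishes $q_z[t]=1$ under the \emph{actual} behavior directly, which is exactly what \Cref{thm:main} requires, and correctly needs no maximality of $\Vcal^1[t]$. For the superset direction, you are right that the statement as written quantifies over arbitrary admissible $\Vcal^2[t]$, and the degenerate choice $\Vcal^2[t]=\Lcal\cup\Acal$ is vacuously strongly $(2F+1)$-robust with respect to $\Lcal\cup\Acal$ yet gives $\Fcal_{\Ccal}^2=\emptyset$, so the inclusion $\Fcal_{\Ccal}\subseteq\Fcal_{\Ccal}^2$ cannot hold without a maximality qualification. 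The paper's proof indeed only shows that followers \emph{inside} $\Vcal^2[t]$ activate under all-ones reporting (containment in the wrong direction) and then leans on \Cref{lem:extreme_case}'s ``largest possible convergent set'' phrasing, implicitly taking $\Vcal^2[t]$ maximal; your repair --- take $\Vcal^2[t]$ to be the all-ones-report activation set and use dominance to get $q_z[t]=1 \Rightarrow z\in\Vcal^2[t]$ --- supplies the missing step, and matches how the proposition is actually used in the simulation section. One small citation fix: the fact that the activated follower set together with $\Lcal\cup\Acal$ induces a strongly $(2F+1)$-robust subgraph is \Cref{lem:partial_2f+1}, not \Cref{lem:strong_r_not_mine}; the latter (plus monotonicity) is what shows every admissible $\Vcal^2[t]$'s followers are contained in that maximal activation set.
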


\begin{proof}

We know that $\Gcal^1[t]$ is strongly $(2F+1)$-robust with respect to $\Lcal_{\Mcal}$ and $\Vcal^1[t] \subseteq \Mcal$. 
Hence, by~\Cref{lem:strong_r_not_mine}, every $i \in \Vcal^1[t] \cap \Fcal_{\Mcal}$ has $\bar{k}\in\{0,\dots, f-1\}$ such that $\sum_{j \in \Ncal_i[t] \cap \Mcal} q_j^i[t,\bar{k}] \geq 2F+1$.
Thus, if all adversaries share $q_a^{u}[t,k] = 0$ for all $k\in\{0,\dots,f-1\}$, only the followers in $\Vcal^1[t] \cap \Fcal_{\Mcal}$ are guaranteed to have $q_i[t]=1$ at time $t$. It follows that any $i$ with $|\Tcal_i^1| = \infty$ has $\lim\sup_{t\to \infty} q_i[t]=1$, and thus convergent by~\Cref{thm:main}. 
By~\Cref{lem:extreme_case}, this defines the smallest possible convergent set, yielding $\Fcal_{\Ccal}^1 \subseteq \Fcal_{\Ccal}$.

An analogous argument applies to $\Gcal^2[t]$, which is strongly $(2F+1)$-robust with respect to $\Lcal \cup \Acal$. 
If all adversaries share $q_a^{u}[t,k] = 1$, we can consider all adversaries as leaders. Then by~\Cref{lem:strong_r_not_mine}, $q_i[t]=1$ for all $i \in \Vcal^2[t] \cap \Fcal_{\Mcal}$ at time $t$. 
Thus, any $i$ with $|\Tcal_i^2| = \infty$ has $\lim\sup_{t\to \infty} q_i[t]=1$ and convergent by~\Cref{thm:main}. 
By~\Cref{lem:extreme_case}, this defines the largest possible convergent set, yielding $\Fcal_{\Ccal} \subseteq \Fcal_{\Ccal}^2$.
\end{proof}
Through~\Cref{prop:subset_superset}, we can explicitly determine the subset and superset of convergent set $\Fcal_{\Ccal}$. The subset $\Fcal_{\Ccal}^1$ is a set of followers that belong infinitely often to a subgraph that is induced \textit{only by normal agents} and strongly $(2F+1)$-robust with respect to $\Lcal_{\Mcal}$. Conversely, the superset $\Fcal_{\Ccal}^2$ consists of followers that belong infinitely often to a subgraph that is strongly $(2F+1)$-robust with respect to $\Lcal\cup \Acal$. Examples of such subgraphs are shown in~\Cref{fig:sequence}.

\section{Simulation Results}
\label{sec:sim}
\begin{figure}
    \centering
    \includegraphics[width=\linewidth]{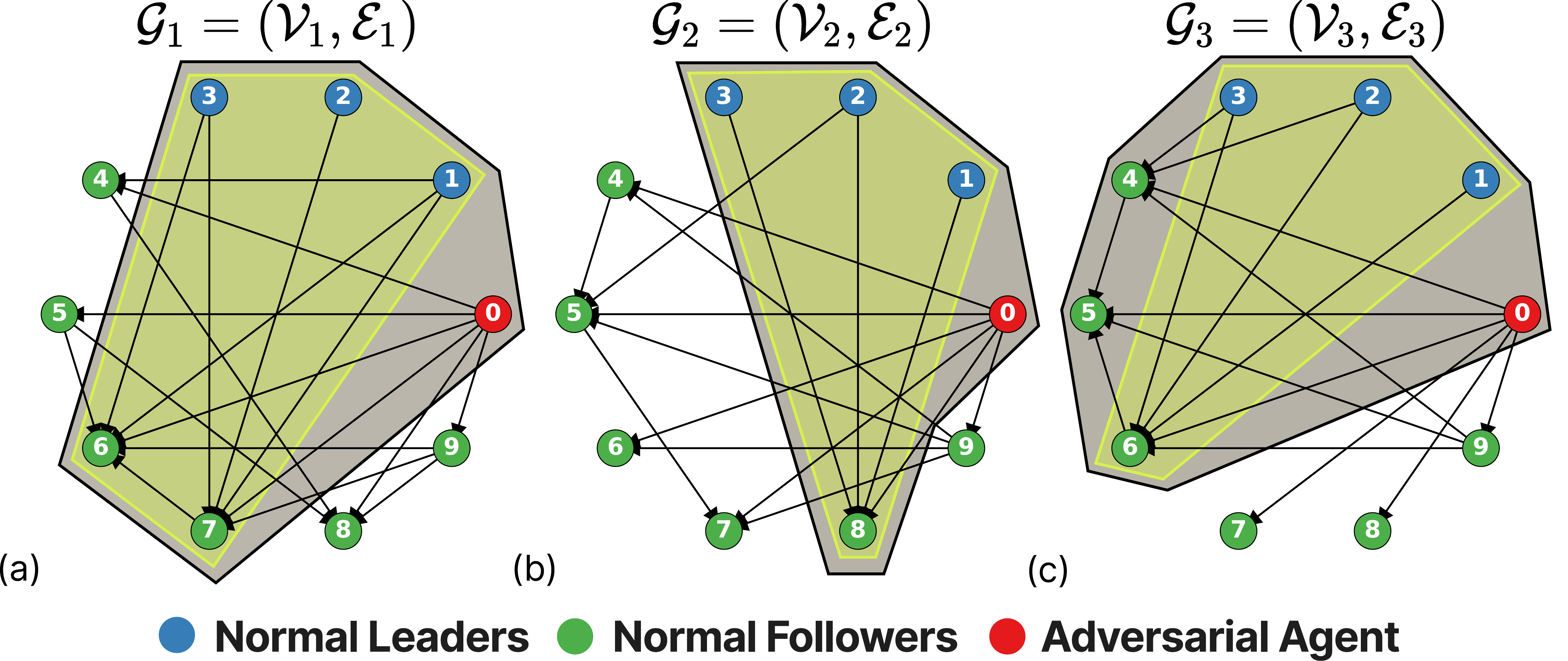}
    \caption{Three digraphs where blue, green, and red nodes represent normal leaders, normal followers, and adversaries, respectively, each under a $1$-local attack. Yellow-shaded subgraphs are strongly $3$-robust with respect to $\mathcal{L}_{\mathcal{M}}$, while the gray-shaded subgraph is strongly $3$-robust with respect to $\mathcal{L} \cup \mathcal{A}$.}
    \label{fig:sequence}
\end{figure}

We evaluate the effectiveness of our method through two sets of simulations. First, we illustrate~\Cref{thm:main} by comparing the performance of the BP-MSR algorithm against those of other existing resilient consensus algorithms. Second, we empirically validate the set bounds in~\Cref{prop:subset_superset}.
For all simulations, we use the graphs in~\Cref{fig:sequence}, each with leaders $\Lcal=\Lcal_{\Mcal}= \{1,2,3\}$ and a $1$-local $\Acal = \{0\}$. The adversary shares $x_0^j[t] = 1000 \cdot \sin\big((t+j)/5\big)$ with the normal followers $j \in \{4,5,6,7,8,9\}$. The initial states of followers are generated randomly in the interval $[-1000, 1000]$. Leaders update their states using $f_r$, drawing the same random value in $[-1000, 1000]$ every $50$ time steps. Code is available here.\footnote{\href{https://github.com/joonlee16/partial-leader-follower-consensus}{https://github.com/joonlee16/partial-leader-follower-consensus}}

\subsubsection{Comparison} For comparison, we include the W-MSR algorithm with parameter $F$, which guarantees resilient leader-follower consensus with $F$-local adversary set $\Acal$ when the network is always either strongly $(2F+1)$-robust or jointly $(F+1)$-robust with $1$ hop~\cite{LeBlanc13, strongly_usevitch18, yuan2024reaching}. We also consider the SW-MSR algorithm with parameters $T$ and $F$, which ensures consensus when the network satisfies strong $(T, t_0, 2F+1)$-robustness~\cite{time_varying_strongly_usevitch20}. To evaluate the applicability of these guarantees, we analyze the robustness of $\Gcal_1$, $\Gcal_2$, and $\Gcal_3$ from~\Cref{fig:sequence}. Each graph is at most strongly $1$-robust with respect to $\Lcal$ and jointly $1$-robust with one hop, implying that resilient consensus cannot be guaranteed by W-MSR~\cite{time_varying_strongly_usevitch20, yuan2024reaching}. 
Similarly, any periodic repetition of the graphs in~\Cref{fig:sequence} fails to satisfy strong $(3,0,3)$-robustness, so consensus is not guaranteed under SW-MSR either~\cite{time_varying_strongly_usevitch20}.

\begin{figure}
\centering
\includegraphics[width=\linewidth]{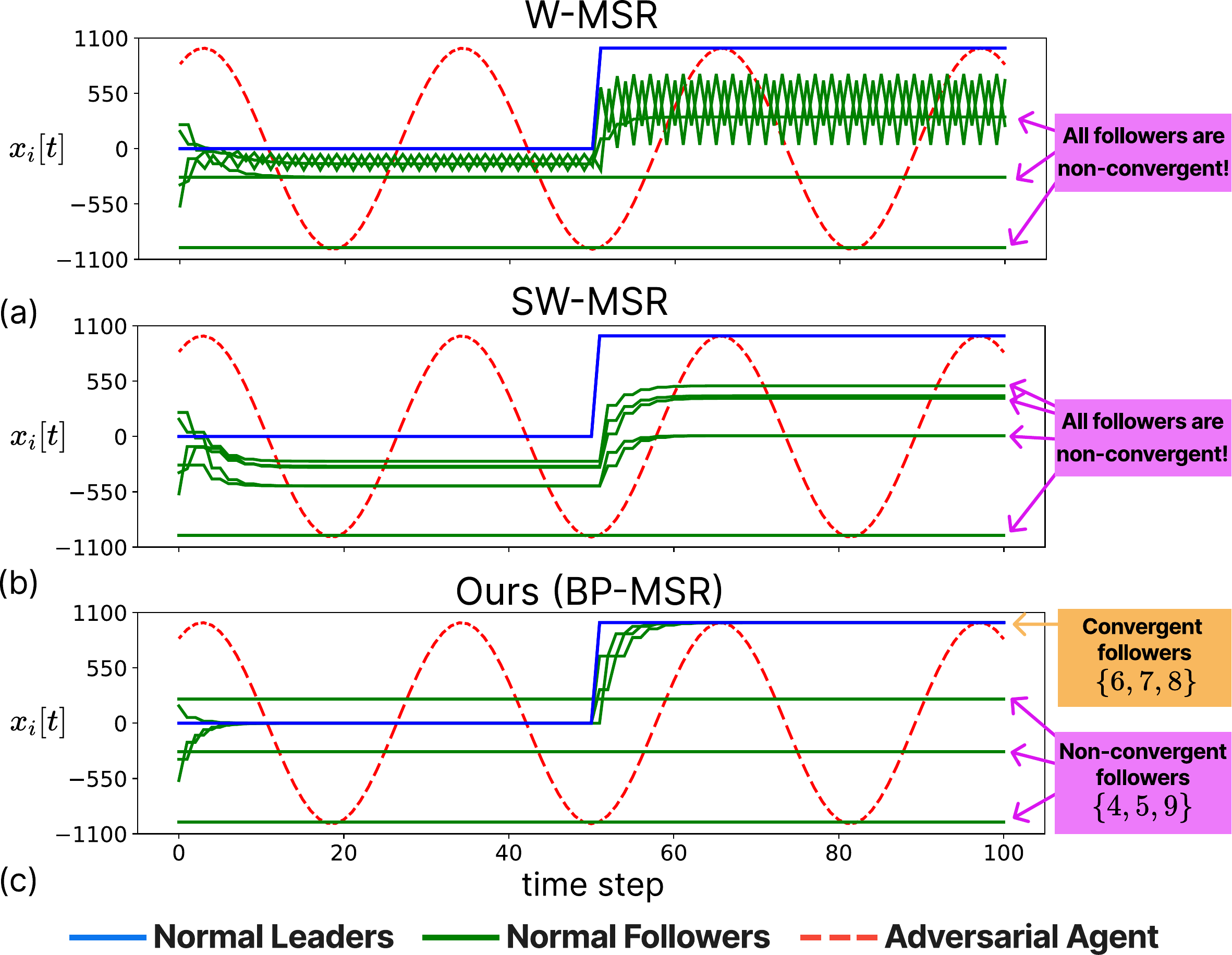}
\caption{Consensus performance of (a) W-MSR, (b) SW-MSR ($T=2$), and (c) BP-MSR algorithms, all with $F=1$, under the graph sequence 
$\mathbb{G}=(\mathcal{G}[t])_{t\in \mathbb{Z}_{\geq 0}}$, where $\mathcal{G}[2\tau]=\mathcal{G}_1$ and $\mathcal{G}[2\tau+1]=\mathcal{G}_2$, for $\tau\in\mathbb{Z}_{\geq 0}$ (see \Cref{fig:sequence} (a)-(b) for $\Gcal_1$ and $\Gcal_2$). Partial leader-follower consensus is achieved by followers $\{6,7,8\}$ only via the BP-MSR algorithm. For clarity, only the adversary state received by follower $5$ is plotted.}
    \label{fig:comparison}
\end{figure}

\Cref{fig:comparison} shows results consistent with the analysis above and~\Cref{thm:main}. 
We consider three cases where each normal agent runs the W-MSR, SW-MSR (with $T=2$), and BP-MSR algorithms, all with $F=1$, under the sequence $\Gbb=(\Gcal[t])_{t\in \Zbb_{\geq 0}}$, where $\Gcal[2\tau]=\Gcal_1$ and $\Gcal[2\tau+1]=\Gcal_2$ for all $\tau \in \Zbb_{\geq 0}$. 
Neither the W-MSR nor the SW-MSR algorithm enables the followers to track the leader’s reference state (blue). 
In contrast, followers in $\Fcal_{\Ccal}=\{6,7,8\}$ reach partial leader-follower consensus under the BP-MSR algorithm, as each satisfies $\limsup_{t \to \infty} q_i[t] = 1$ and thus belongs to strongly $3$-robust subgraphs infinitely often (see yellow-shaded regions in~\Cref{fig:sequence} (a)-(b)). Meanwhile, followers $\{4,5,9\}$ are non-convergent under BP-MSR algorithm but remain within the convex hull of the normal agents’ states.

\begin{figure}
    \centering
\includegraphics[width=\linewidth]{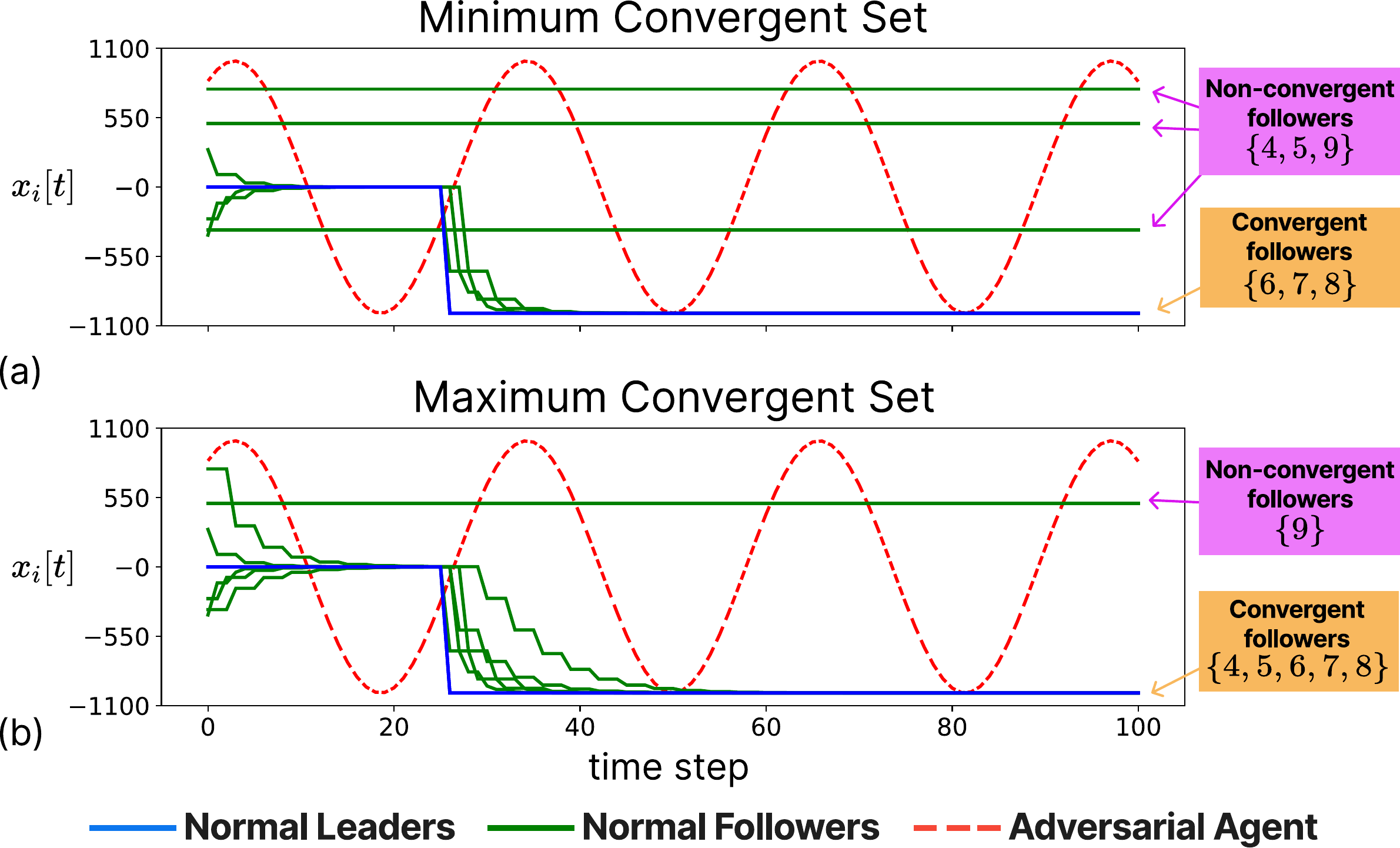}
    \caption{Consensus performance of the BP-MSR algorithm under the periodic graph sequence $\mathbb{G}$, where $\mathcal{G}_1$, $\mathcal{G}_2$, and $\mathcal{G}_3$ from~\Cref{fig:sequence} repeat periodically. The adversary $0$ (a) always shares $q_0^u[t,k]=0$ for all time with out-neighbors $u\in \Ncal_0^o[t]$, yielding the smallest $\Fcal_{\Ccal}$ possible, and (b) always shares $q_0^u[t,k]=1$ for all time, yielding the largest $\Fcal_{\Ccal}$ possible. For clarity, we plot only the adversary’s state received by follower~5.}
    \label{fig:sub_super}
\end{figure}

\subsubsection{Convergent Set Analysis}\label{sec:convergent} Now we analyze the convergent set $\Fcal_{\Ccal}$ under the BP-MSR algorithm with the graph sequence $\Gbb$ where $\Gcal_1$, $\Gcal_2$, and $\Gcal_3$ repeat periodically. Although the exact $\Fcal_{\Ccal}$ might be difficult, we can identify its subset and superset using~\Cref{prop:subset_superset}. In $\mathcal{G}_1$, $\mathcal{G}_2$, and $\mathcal{G}_3$, the followers $\{6,7\}$, $\{8\}$, and $\{6\}$ respectively belong to strongly $3$-robust subgraphs with respect to $\mathcal{L}_{\mathcal{M}}$ (highlighted in yellow in~\Cref{fig:sequence}). Similarly, in $\mathcal{G}_1$, $\mathcal{G}_2$, and $\mathcal{G}_3$, the followers $\{6,7\}$, $\{8\}$, and $\{4,5,6\}$ respectively belong to strongly $3$-robust subgraphs with respect to $\Lcal \cup \Acal$ (as highlighted in gray in~\Cref{fig:sequence}). Since the graphs are repeating periodically, we can deduce $\{6,7,8\} \subseteq \Fcal_{\Ccal} \subseteq \{4,5,6,7,8\}$ using~\Cref{prop:subset_superset}. This aligns with~\Cref{fig:sub_super} (a) and (b), which illustrate the smallest and largest convergent sets (by~\Cref{lem:extreme_case}), obtained when Byzantine agent $0$ always shares $q_0^u[t,k]=0$ and $q_0^u[t,k]=1$ with its out-neighbors, respectively. In case (a), only followers $\{6,7,8\}$ converge, whereas in case (b), $\{4,5,6,7,8\}$ converge.

\section{Conclusions}
\label{sec:conc}
  
This paper studies resilient leader-follower consensus for a subset of followers in an arbitrary sequence of time-varying graphs. We propose a novel distributed algorithm that allows each follower to locally decide when to share and update its state based on its connectivity at each time step. We provide a theoretical characterization of the followers guaranteed to achieve leader-follower consensus. Finally, we support our results with simulations showing that our method allows a subset of followers to achieve consensus even when robustness conditions of the entire network are not satisfied.

%\addtolength{\textheight}{-12cm}   % This command serves to balance the column lengths
                                  % on the last page of the document manually. It shortens
                                  % the textheight of the last page by a suitable amount.
                                  % This command does not take effect until the next page
                                  % so it should come on the page before the last. Make
                                  % sure that you do not shorten the textheight too much.

%%%%%%%%%%%%%%%%%%%%%%%%%%%%%%%%%%%%%%%%%%%%%%%%%%%%%%%%%%%%%%%%%%%%%%%%%%%%%%%%

%%%%%%%%%%%%%%%%%%%%%%%%%%%%%%%%%%%%%%%%%%%%%%%%%%%%%%%%%%%%%%%%%%%%%%%%%%%%%%%%

%%%%%%%%%%%%%%%%%%%%%%%%%%%%%%%%%%%%%%%%%%%%%%%%%%%%%%%%%%%%%%%%%%%%%%%%%%%%%%%%
% \section*{APPENDIX}

%%%%%%%%%%%%%%%%%%%%%%%%%%%%%%%%%%%%%%%%%%%%%%%%%%%%%%%%%%%%%%%%%%%%%%%%%%%%%%%%

\bibliographystyle{IEEEtran}
\bibliography{references_ll}

\end{document}